\newcommand{\mscr}{\mathcal}
\newcommand{\twid}[1]{\widetilde{#1}}
\newcommand{\ZZ}{\mathbb{Z}}
\theoremstyle{plain}
\newtheorem{thm}{Theorem}
\newtheorem{lem}[thm]{Lemma}
\theoremstyle{definition}
\newtheorem{defn}[thm]{Definition} 
\newtheorem{definition}[thm]{Definition} 
\newtheorem{example}[thm]{Example}
\newtheorem{remark}[thm]{Remark}
\newcommand{\iid}{\overset{\text{iid}}{\sim}}
\newcommand{\defeq}{\vcentcolon=}
\newcommand{\RR}{\mathbb{R}}
\newcommand{\EE}{\mathbb{E}}
\DeclareMathOperator{\var}{Var}
\DeclareMathOperator{\cov}{Cov}
\newcommand{\ul}{\underline}
\newcommand{\ol}{\overline}
\DeclareMathOperator{\vect}{vec}
\newcommand\blfootnote[1]{%
  \begingroup
  \renewcommand\thefootnote{}\footnote{#1}%
  \addtocounter{footnote}{-1}%
  \endgroup
}
\begin{document}

\title{Best Linear Unbiased Estimate\\  from Privatized Contingency Tables}

\author{\name Jordan Awan
\email jawan@purdue.edu \\
       \addr Department of Statistics\\
       Purdue University\\
       West Lafayette, IN 47907, USA
       \AND
       \name Adam Edwards \email amedwards@mitre.org \\
       \addr The MITRE Corporation\\
       McLean, VA 22102, USA
       \AND
       \name Paul Bartholomew \email pbartholomew@mitre.org \\
       \addr The MITRE Corporation\\
       McLean, VA 22102, USA
       \AND
       \name Andrew Sillers \email asillers@mitre.org\\
       \addr The MITRE Corporation\\
       McLean, VA 22102, USA}
       \blfootnote{Jordan Awan and Adam Edwards are co-first authors and contributed equally to this work.} 
       \blfootnote{Approved for Public Release; Distribution Unlimited. Public Release Case Number 24-2176.}

\editor{}
\thispagestyle{empty}
\maketitle

\begin{abstract}
In differential privacy (DP) mechanisms, it can be beneficial to release ``redundant'' outputs,  where some quantities can be estimated in multiple ways by  combining different privatized values. Indeed, the DP 2020 Decennial Census products published by the U.S. Census Bureau  consist of such redundant noisy counts. When redundancy is present, the DP output can be improved by enforcing self-consistency (i.e.,  estimators obtained using different noisy counts result in the same value), and we show that the minimum variance processing is a linear projection. 
  However, standard projection algorithms require excessive computation and memory, making them impractical for large-scale applications such as the Decennial Census. 
  We propose the Scalable Efficient Algorithm for Best Linear Unbiased Estimate (SEA BLUE), based on a two-step process of aggregation and differencing that 
  1) enforces self-consistency through a linear and unbiased procedure, 
  2) is computationally and memory efficient, 
  3) achieves the minimum variance solution under certain structural assumptions, and 
  4) is empirically shown to be robust to violations of these structural assumptions. 
   We propose three methods of calculating confidence intervals from our estimates, under various assumptions. Finally, we apply SEA BLUE to two 2010 Census demonstration products, illustrating its scalability and validity. 
\end{abstract}

\begin{keywords}
 differential privacy, projection, constrained optimization, confidence interval, U.S. Census Bureau
\end{keywords}

\section{Introduction}
\label{sec:intro}
Differential privacy (DP) is a formal privacy framework, proposed by \citet{dwork2006calibrating}, which requires introducing randomness into a statistical procedure to provide provable privacy protection. The U.S. Census Bureau has employed differential privacy techniques to protect the 2020 Decennial Census \citep{abowd2019census}, which results in uncertainty about the true counts. The differential privacy mechanism used by the U.S. Census Bureau in 2020 produces millions of noisy counts which are ``redundant'' in that the same values are counted at various levels of geography (e.g., block, county, state) as well as different levels of detail (e.g., different $k$-way contingency tables). However, because independent noise is added to each count, these values are not \emph{self-consistent}, meaning that detailed noisy counts do not add up to marginal noisy counts (see Figure \ref{fig:tables} for an example). The lack of self-consistency is important because different combinations of the noisy counts can give different estimates of the same quantity, which can be a problem when a single ``official'' estimate is needed. Furthermore, the existence of multiple independent estimates implies that each individual estimate is sub-optimal, whereas the collection of estimates could be aggregated to produce a single refined estimate.

To process the noisy counts, the U.S. Census Bureau applied the \emph{TopDown Algorithm}, which enforces self-consistency, non-negativity, and integer-valued constraints. While their method substantially improves the average error, it also introduces bias in the DP estimates, and as a non-linear procedure it is difficult to quantify the uncertainty in the resulting estimates \citep{santos2020differential,kenny2021use,winkler2021differential}. As an alternative, we propose a linear procedure which enforces self-consistency but allows negative and non-integer-valued estimates. 
This approach has the following  benefits: 1) the procedure results in unbiased estimates, and 2) because it is linear, we can accurately estimate the error and distribution of the final estimates to produce confidence intervals for the true counts. 

We first show that the best linear unbiased estimator (BLUE) for the true counts assuming only self-consistency is a linear projection which can be expressed in terms of the self-consistency constraints. This result follows from the Gauss-Markov theorem and agrees with similar prior results in the DP literature (e.g., \citealp{gao2022subspace}). However, implementing this projection by standard techniques requires inverting a large matrix, resulting in a high computational and memory cost. This makes this approach inapplicable for large applications, such as the Demographic and Housing Characteristics (DHC) Census product. 

To address this limitation, we propose a novel method that we call the Scalable Efficient Algorithm for the Best Linear Unbiased Estimator (SEA BLUE), which consists of a two step process of aggregation and differencing. This procedure is inspired by the up and down passes of \citet{hay2010boosting,honaker2015efficient,kifer2021}, which process noisy values in a tree structure. However, in our setting detailed counts are constrained to be self-consistent with multiple marginal counts, so we do not have a tree, but a much more complex structure. This makes the identification of the analogous process non-trivial. Our procedure:
\begin{enumerate}
  \item   enforces self-consistency through a linear and unbiased procedure, 
  \item   is computationally and memory efficient, 
  \item   produces the best linear unbiased estimate under certain structural assumptions, and 
  \item   is empirically demonstrated to be robust to mild violations of the assumptions. 
\end{enumerate}
 
 Using the linear nature of our procedure, we propose three methods of producing confidence intervals for the original counts. Two of these methods calculate the resulting standard error of the estimates either exactly or via Monte Carlo and produce confidence intervals using a normality assumption. The third method is a distribution-free Monte Carlo method, which gives valid confidence intervals which are valid for all noise distributions. 
 
 We illustrate our method in simulation studies, comparing mean-squared error (MSE), confidence interval coverage/width, and computational time/memory. We also apply our methodology to two 2010 Census demonstration products, Redistricting Data (Public Law 94-171, which we abbreviate as PL94) and Demographic and Housing
Characteristics (DHC), illustrating the scalability and validity of our methods. Proofs and technical details are found in the appendix.

While our method is motivated by the 2020 Census products, it can be applied in other settings where independent noisy counts are observed at varying levels of detail. 
\subsection{Related  Work}
The first processing of redundant DP measurements was due to \citet{hay2010boosting}, which leveraged a binary tree structure of the different true values. Their method used a two stage up and down pass to sequentially update the estimates and finally resulted in minimum variance estimates that are self-consistent. A similar approach has been considered by \citet{honaker2015efficient} and \citet{kifer2021}, and this method has been applied to derive DP confidence intervals for the median of real-valued data \citep{drechsler2022nonparametric}. Recently \citet{cumings2024full} extended this tree-based approach to process the  Census noisy measurement files across geographies, which forms a tree. In contrast, SEA BLUE is designed to optimize estimates within geographies using the structure of different marginal counts, which does not form a tree.

\citet{mccartan2023making} and \citet{kenny2024evaluating} proposed a minimum variance weighting of estimates which can be summed up from other tables, which is the same as our collection step, which forms the first step of our SEA BLUE procedure. \citet{kenny2024evaluating} compare these unbiased weighted estimates to the TopDown estimates as well as the estimates from swapping, the privacy protection algorithm used in 2010. While the collection step estimate is optimal ``from below,'' as we formally prove in Section \ref{s:collection}, it is not the best linear unbiased estimate as it does not use all information available in the tables. We show in Section \ref{s:downstep} that our down pass is needed after the collection step to obtain the BLUE. 

 \section{Background and Notation}
 In this section, we review some basic results in linear algebra and differential privacy, and set the notation for the paper -- especially the notation used for contingency tables. 
\subsection{Linear Algebra}
In $\RR^n$ the standard inner product is $\langle x,y\rangle=x^\top y$. To account for covariance structure, other inner products also exist: for a square and positive definite matrix $\Sigma$, an inner product is $\langle x,y\rangle_{\Sigma} = x^\top \Sigma y$, which has corresponding norm  $\lVert x\rVert_\Sigma=\sqrt{\langle x,x\rangle_\Sigma}$.

For the inner product $\langle\cdot,\cdot\rangle_\Sigma$ and a tuple of linearly independent vectors in $\RR^n$, $V = (v_1,\ldots, v_k)$, the projection operator onto $S_V=\mathrm{span}(V)$ is $\mathrm{Proj}^\Sigma_{S_V} = V(V^\top \Sigma V)^{-1} V^\top\Sigma$. The projection is the solution to the following ``least squares'' problem: Let $y\in \RR^n$; then $\mathrm{Proj}_{S_V}^\Sigma y = \arg\min_{x\in S_V} \lVert x-y\rVert_\Sigma$. Furthermore, $\mathrm{Proj}^{\Sigma^{-1}}_{S_V} = I-\mathrm{Proj}^{\Sigma}_{S_V^\perp}$, where $S_V^\perp$ is the orthogonal complement of $S_V$.  


 \subsection{Contingency Table Notation}\label{s:tableNotation}
 In this section we introduce some specialized notation that is necessary to precisely describe our proposed methodology. A summary of the notation is found in Table \ref{tab:notation} in Appendix \ref{s:notation}.
 
\begin{figure}[t]
\centering
     \includegraphics[width=.27\linewidth,valign=t]{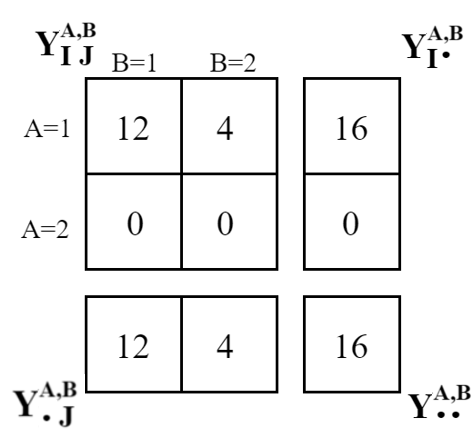}
          \includegraphics[width=.315\linewidth,valign=t]{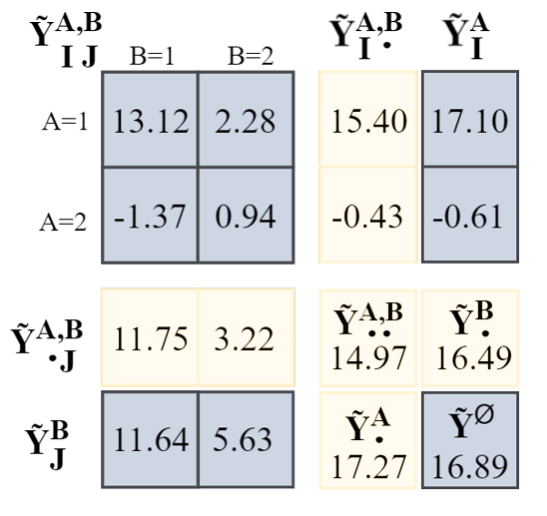}
     \includegraphics[width=.27\linewidth,valign=t]{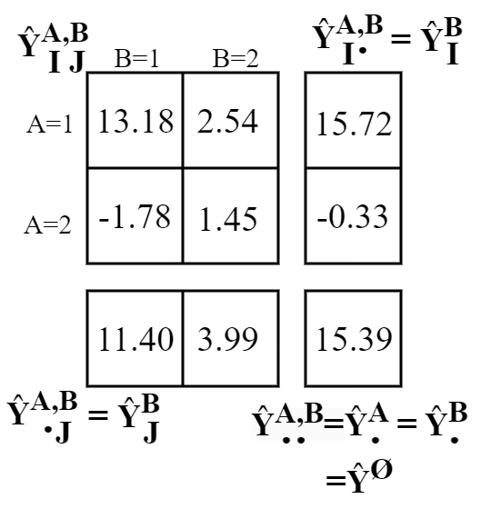}
     \caption{Left: true sensitive counts. Note that they are self-consistent with the margins. Middle: Dark blue values are raw noisy measurements and light yellow values are computed margins. Note that the values are not self-consistent. Right: Final processed estimates, which are self-consistent with their margins. }
     \label{fig:tables}
 \end{figure}
We generally use the variables  $Y$ for true (unobserved) counts, $\tilde Y$ for the observed noisy counts, $\acute Y$ for intermediate estimates of $Y$ from the collection step of the algorithm, and $\hat Y$ for final estimators; see Example \ref{ex:tables} and Figure \ref{fig:tables} for an example. Given a finite set of variables $\mscr A = \{A,B,C,\ldots\}$ with an ordering (such as alphabetical), and $S\subset \mscr A$ a subset of these variables, we write $\twid Y^S$ as the table of counts marginalized over the variables not in $S$ to which noise has been applied. For a variable $A\in \mscr A$, $|A|$ denotes the number of levels that $A$ can take on. If $S=\{A_1,\ldots, A_k\}\subset \mscr A$, then  $V^S=\{1,2,\ldots, |A_1|\}\times \cdots\times \{1,2,\ldots, |A_k|\}$ is the set of indices for $Y^S$; if $v\in V^S$ is a vector, then $\twid Y^S_v$ represents the single noisy count in $\twid Y^S$ which corresponds to $A_1=v_1$, $A_2=v_2$, and so on. We also consider $V^S_\bullet=\{\bullet,1,2,\ldots |A_1|\}\times \cdots\times \{\bullet,1,2,\ldots, |A_k|\}$, which allows for a ``$\bullet$'' in some of the entries. If $v\in V^S_\bullet$, the entries with a ``$\bullet$''  indicate that we are summing over all of the values for the corresponding variable. We use the same notation for $Y$, $\acute Y$, and $\hat Y$ as well. For specific examples, we simplify the notation by dropping braces on the sets $S$ and vectors $v$ (e.g., for $S=\{A,B\}$ and $v=(i,j)$, we  write $\twid Y_{i,j}^{A,B}$ in place of  $\twid Y_{(i,j)}^{\{A,B\}}$.

We use the term \emph{table} to refer to both subsets of variables $S\subset \mscr A$ as well as sets of counts of the form $\twid Y^S$ (similarly for $\acute Y^S$ and $\hat Y^S$). It should be clear from context which type of object is being referred to when the term ``table'' is used. When an index is given, we call objects like $\twid Y^S_v$ and $\hat Y^S_v$ either \emph{counts} or \emph{estimates} of $Y^S_v$. 

\begin{example}\label{ex:tables}
Suppose that we have two binary variables  $A$ and $B$ (e.g., Hispanic and Voting Age). Let $Y_{i,j}$ denote the true counts for $A=i$ and $B=j$. We observe noisy counts $\twid Y^{A,B}_{i,j}=Y_{i,j}+\text{noise}$, $\twid Y^{A}_{i}=Y_{i,\bullet}+\text{noise}$, $\twid Y^B_j=Y_{\bullet,j}+\text{noise}$ and $\twid Y^\emptyset = Y_{\bullet,\bullet}+\text{noise}$. Note that while $Y^A_{i}=Y_{i,\bullet}$, this is generally no longer the case for the values in $\twid Y$: $\twid Y^A_{i}\neq \twid Y^{A,B}_{i,\bullet}$. However, our final estimates $\hat Y$ will be ``self-consistent'' in this sense: e.g.,   $\hat Y^A_i=\hat Y^{A,B}_{i,\bullet}$. See Figure \ref{fig:tables} for an illustration.
\end{example}

Some additional notation that will be important to describe and analyze our procedure is as follows:  If $S\subset \mscr A$, we write $\ul S\subset S$ and $\ol S \supset S$ as notation to remind us that $\ul S$ is a subset of $S$ and $\ol S$ is a superset of $S$. If $v\in V^S$ and $\ol S\supset S$, then we define $v(\ol S)\in V^{\ol S}_\bullet$ as the vector that agrees with $v$ on the entries corresponding to $S$ and has the symbol ``$\bullet$'' for the entries corresponding to $\ol S\setminus S$; the effect of this notation is to marginalize out the variables $\ol S\setminus S$ and use $v$ to index the variables from $S$ in the marginalized table. Similarly, for $\ul S\subset S$, we define $v[\ul S]\in V^{\ul S}$ as the vector which extracts the entries of $v$ which correspond to $\ul S$.  
Combining these two notations, we can also write $v[\ul S](S)\in V^S_\bullet$ for $\ul S\subset S$ and $v\in V^S$, and this vector agrees with $v$ on $\ul S$ and has ``$\bullet$'' on $S\setminus \ul S$, which has the effect of marginalizing out $S\setminus \ul S$ and using the remaining values in $v$ to index the marginal table on $\ul S$.

 \subsection{Differential Privacy}\label{s:DP}
 Differential privacy (DP) is a probabilistic framework which formally quantifies the amount of privacy protection offered by a mechanism (randomized algorithm) \citep{dwork2006calibrating}.  Intuitively, a mechanism satisfies DP if, when it is run on two databases differing in one person's contribution, the distribution of outputs is similar.  There are many different notions of differential privacy, such as pure-DP, approximate-DP,  Gaussian-DP \citep{dong2022gaussian}, and zero-concentrated DP \citep{bun2016concentrated}, which differ in how the ``similarity'' between distributions is measured. In zero-concentrated DP, the version of DP employed by the U.S. Census Bureau \citep{Abowd20222020}, the ``similarity'' is measured in terms of divergences on probability measures. 

 \begin{definition}
[Zero-Concentrated DP: \cite{bun2016concentrated}] Let $\rho\geq 0$. A mechanism $M:\mscr X\rightarrow \mscr Y$ satisfies $\rho$-zero concentrated differential privacy ($\rho$-zCDP) if  for any two databases $X$ and $X'$ differing in one entry, we have 
$D_{\alpha}(M(X)||M(X'))\leq \rho \alpha, $ 
for all $\alpha\in (1,\infty)$, where $D_\alpha(\cdot ||\cdot)$ is the order $\alpha$ R\'enyi divergence on probability measures:
\[D_\alpha(P||Q) = \frac{1}{\alpha-1} \log \left( \EE_{X\sim P} \left[ \frac{p(X)}{q(X)}\right]^{\alpha-1}\right),\] where $p$ and $q$ are the probability mass/density functions for $P$ and $Q$, respectively. 
 \end{definition}

In $\rho$-zCDP, smaller values of $\rho$ ensure that the distributions of $M(X)$ and $M(X')$ are more similar, offering a stronger privacy guarantee.  

The simplest and most widely used method of achieving differential privacy is to add independent noise to the entries of a statistic. The noise must be scaled to the \emph{sensitivity} of the statistic. A statistic $T:\mscr X\rightarrow \mathbb{R}^k$ has $\ell_2$-sensitivity $\Delta$ if 
$\Delta \geq \sup_{X,X'} \lVert T(X)-T(X')\rVert_2,$ 
where the supremum is over all $X$ and $X'$ which differ in one entry. If $T$ has $\ell_2$-sensitivity $\Delta$, then $T+N$  satisfies $\rho$-zCDP, where $N\sim \mscr N(0,\Delta^2/(2\rho)I)$. If $T$ takes values in $\ZZ^n$ and has $\ell_2$-sensitivity $\Delta$, then $T_i+N_i$, where $N_i\iid N_{\ZZ}(0,\Delta^2/(2\rho))$ satisfies $\rho$-zCDP. Here, $\mscr N_{\ZZ}(0,\sigma^2)$ is the \emph{discrete Gaussian} distribution with probability mass function $P(X=t)\propto \exp(-t^2/(2\sigma^2))$ for $t\in \ZZ$ \citep{canonne2020discrete}. The U.S. Census Bureau used discrete Gaussian noise to produce noisy measurements of the 2020 Decennial Census tabulations,  which satisfy zCDP \citep{Abowd20222020}.  The methodology developed in this paper can be applied post-process a DP output from any DP framework as long as additive noise is used.

Differential privacy has several important properties including composition, group privacy, and immunity to post-processing \citep{dwork2014algorithmic,bun2016concentrated}. We review immunity to post-processing as it is central to the problem tackled in this paper. 


{\bf Immunity to Post-Processing: } If $M:\mscr X\rightarrow \mscr Y$ satisfies $\rho$-zCDP, and $g:\mscr Y\rightarrow \mscr Z$ is a (possibly randomized) function that does not depend on the sensitive data, then $g\circ M:\mscr X\rightarrow \mscr Z$ satisfies $\rho$-zCDP as well. Immunity to post-processing is important because one can first design a privacy mechanism which produces summary statistics, and then process these without compromising the privacy guarantee. Furthermore, by the transparency property of DP, the mechanism $M$ itself can be publicly known and incorporated into the post-processing. For example, the ``noisy measurements'' (which we denote by $\twid Y$) published by the U.S. Census Bureau satisfy zCDP; by post-processing, the procedure we propose can give refined estimates $\hat Y$ without compromising the privacy guarantee\footnote{Technically the Census noisy measurement files do not strictly satisfy differential privacy because they contain \emph{invariants}, which are values from the data reported without noise. See \citet{gao2022subspace} and \citet{cho2024formal} for investigations on what the technical guarantee is for these Census products}.

\section{Problem Setup and Theoretical BLUE}
For the problem setup, we assume that we observe noisy counts $\twid Y^S_v=Y^S_{v}+N^S_v$ for $S\in \mscr O\subset \{A,B,C,\ldots\}$ and $v\in V^S$, where $\mscr O$ denotes the set of \emph{observed tables}, $Y$ is the original contingency table of sensitive counts, and $N^S_v$ are the noise random variables, which have a known distribution. Our goal is to produce refined estimates $\hat Y_v^S$ for $S\in \mscr D$, where $\mscr D$ is the set of \emph{desired tables}, such that the $\hat Y_v^S$ are \emph{self-consistent}, meaning that we have $\hat Y^S_{v(S)}=\hat Y^{\ul S}_{v}$ for all $S, \ul S\in \mscr D$, such that $\ul S\subset S$ and for all $v\in V^{\ul S}$. In other words, marginalizing out the variables of $S\setminus \ul S$ in $\hat Y^S$ results in the same values as in $\hat Y^{\ul S}$. We assume that the tables in $\mscr D$ are nested: if $S\in \mscr D$ and $\ul S\subset S$, then $\ul S\in \mscr D$, but $\mscr O$ need not have this structure. Furthermore, we require our final estimates to be linear and unbiased, with the smallest variance possible. 

 For the remainder of the paper, the true counts $Y$ will be viewed as fixed, unknown quantities. Thus, statements about the distribution of $\twid Y$, or other random variables, refer only to the randomness due to the noise mechanism.

We consider a few additional structural assumptions on the problem:
\begin{itemize}
    \item [(A1)][Known noise] The noise random variables $N^S_v$ are mean zero with known covariance $\Sigma=\cov(N)$. Furthermore, $\Sigma$ is positive-definite and consists of finite entries.
    \item [(A2)] [Uncorrelated noise] The $N^S_v$ are uncorrelated, implying that $\Sigma$ is a diagonal matrix.
    \item [(A3)] [Equal variance within tables] The noise variance is constant within a table: $\var(\twid Y_v^S)=\var(\twid Y_w^S)$ for all $v,w\in V^S$ and $S\in \mscr O$.
\end{itemize}

Note that (A1)-(A3) are all verifiable assumptions on the problem setup. 
Furthermore, by the transparency property of DP, the privacy mechanism used can be published along with the privatized data $\twid Y$ without weakening the theoretical privacy guarantee, to allow outside users to check these assumptions as well.  Assumptions (A1) and (A2) were previously used in the literature \citep{hay2010boosting,honaker2015efficient,kifer2021}, while (A3) is new to this work and is needed to limit the search of all possible estimators.

\begin{example}[Motivating Census Data Products]
   The key motivating data sets for this paper are noisy measurement files for the 2020 Decennial Census data products, PL94 and DHC. Restricted to a single geography (e.g., only county-level counts for a particular county), the PL94 product satisfies (A1)-(A3); DHC does not exactly fit our framework as Age variable is binned to different values in different tables. In our application to DHC in Section \ref{s:simulations}, we use a subset of the noisy counts, which do satisfy (A1)-(A3). More details on these Census products is provided in Section \ref{s:products}.
\end{example}

\begin{remark}[Invariants]\label{rem:invariant}
    Some of the Census tables also have \emph{invariant} counts, which are values observed without noise. While such counts do not explicitly fit into the assumptions (A1)-(A3), with the present version of our method, our procedure can be applied by setting the variance to be  arbitrarily small  for these counts.  We suspect that a modification to our procedure can formally incorporate these invariants as part of the constraints without such an approximation, but leave this as future work.
\end{remark}

A general solution can be provided to this problem, assuming only property (A1), using the theory of linear models. We show that the best linear unbiased estimator, can be expressed in terms of linear projections, and  we give a formula for the projection operator in terms of the constraint matrix. The proof of the result is based on the theory of linearly constrained linear regression. 

\begin{restatable}{thm}{thmgeneral}\label{thm:general}
Let $X\in \RR^n$ be a random vector with mean $\mu$ and covariance $\Sigma$, 
where it is known that $\mu \in S_b = \{x\in \RR^n\mid A x=b\}$. Let $v$ be such that $b=Av$ and call $S_0=\{x\in \RR^n\mid Ax=0\}$. Call $\hat \mu = \mathrm{Proj}_{S_0}^{\Sigma^{-1}}(X)+(\mathrm{Proj}_{S_0^\perp}^{\Sigma})^\top v$. Then, 
\begin{itemize}
    \item $\hat \mu$ is the BLUE for $\mu$. Furthermore, $\mathrm{Cov}(\hat\mu)=\mathrm{Proj}_{S_0}^{\Sigma^{-1}}\Sigma (\mathrm{Proj}_{S_0}^{\Sigma^{-1}})^\top$. 
    \item If $X\sim \mscr N(\mu,\Sigma)$, then $\hat\mu$ is the uniformly minimum variance unbiased estimator for $\mu$; furthermore, $\hat\mu \sim \mscr N(\mu, \mathrm{Proj}_{S_0}^{\Sigma^{-1}}\Sigma (\mathrm{Proj}_{S_0}^{\Sigma^{-1}})^\top)$.
\end{itemize}
 If the rows of $A$ are linearly independent, then 
 \[\mathrm{Proj}_{S_0}^{\Sigma^{-1}}=\left(I-\Sigma A^\top (A\Sigma A^\top)^{-1} A\right)\qquad\text{and} \qquad(\mathrm{Proj}_{S_0^\perp}^{\Sigma})^\top=\Sigma A^\top (A\Sigma A^\top)^{-1} A.\]
\end{restatable}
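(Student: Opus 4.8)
The plan is to convert the affine-constrained estimation problem into a standard (unconstrained) Gauss--Markov model and then read off the answer from the generalized least squares (Aitken) estimator. Since $\mu\in S_b=v+S_0$, I would fix any matrix $W$ whose columns form a basis of the linear subspace $S_0=\ker A$ and write $\mu=v+W\gamma$ for a free parameter $\gamma\in\RR^{\dim S_0}$. Subtracting the known offset $v$ turns the observation model into $X-v=W\gamma+\varepsilon$ with $\EE[\varepsilon]=0$ and $\cov(\varepsilon)=\Sigma$; because $W$ has full column rank and $\Sigma$ is positive definite by (A1), this is a full-rank Gauss--Markov model.

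By the Gauss--Markov theorem the BLUE of $\gamma$ is $\hat\gamma=(W^\top\Sigma^{-1}W)^{-1}W^\top\Sigma^{-1}(X-v)$, so the BLUE of $W\gamma=\mu-v$ is $W\hat\gamma=W(W^\top\Sigma^{-1}W)^{-1}W^\top\Sigma^{-1}(X-v)$. Recognizing the background formula $\mathrm{Proj}^{\Sigma^{-1}}_{S_0}=W(W^\top\Sigma^{-1}W)^{-1}W^\top\Sigma^{-1}$, this equals $\mathrm{Proj}^{\Sigma^{-1}}_{S_0}(X-v)$, and adding back $v$ yields $\hat\mu=v+\mathrm{Proj}^{\Sigma^{-1}}_{S_0}(X-v)=\mathrm{Proj}^{\Sigma^{-1}}_{S_0}(X)+(I-\mathrm{Proj}^{\Sigma^{-1}}_{S_0})v$. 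I would then record two consistency checks. First, $\hat\mu$ does not depend on the arbitrary choices: $\mathrm{Proj}^{\Sigma^{-1}}_{S_0}$ depends only on $S_0$, and if $v,v'$ both solve $Ax=b$ then $v-v'\in S_0$ is fixed by the projection, so $(I-\mathrm{Proj}^{\Sigma^{-1}}_{S_0})v$ is unchanged. Second, unbiasedness follows because $\mu-v\in S_0$ is fixed by the projection, giving $\EE[\hat\mu]=\mathrm{Proj}^{\Sigma^{-1}}_{S_0}\mu+(I-\mathrm{Proj}^{\Sigma^{-1}}_{S_0})v=\mu$.

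For the covariance, writing $P=\mathrm{Proj}^{\Sigma^{-1}}_{S_0}$ and noting the deterministic offset drops out, $\cov(\hat\mu)=P\Sigma P^\top$. The key simplification $P\Sigma P^\top=P\Sigma$ I would obtain from the two defining properties of an orthogonal projection in the $\Sigma^{-1}$ inner product: idempotence ($P^2=P$) and self-adjointness ($\Sigma^{-1}P=P^\top\Sigma^{-1}$, equivalently $P^\top=\Sigma^{-1}P\Sigma$); substituting the latter gives $P\Sigma P^\top=P\Sigma\Sigma^{-1}P\Sigma=P^2\Sigma=P\Sigma$. Under the additional assumption $X\sim\mscr N(\mu,\Sigma)$, normality of $\hat\mu$ with the stated mean and covariance is immediate since $\hat\mu$ is an affine image of a Gaussian vector; the UMVUE claim I would get from Lehmann--Scheff\'e, since in the Gaussian linear model $\hat\mu$ is an unbiased linear function of the complete sufficient statistic $W^\top\Sigma^{-1}X$ of the full-rank exponential family.

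Finally, for the closed forms when $A$ has full row rank (so $A\Sigma A^\top$ is invertible), I would verify directly that $I-\Sigma A^\top(A\Sigma A^\top)^{-1}A$ is the orthogonal projection onto $S_0$ in the $\Sigma^{-1}$ inner product by checking the three characterizing properties: its range is $S_0$ (every image is annihilated by $A$, and it fixes each $x$ with $Ax=0$), idempotence (since $\Sigma A^\top(A\Sigma A^\top)^{-1}A$ is idempotent), and $\Sigma^{-1}$-self-adjointness (as $\Sigma^{-1}-A^\top(A\Sigma A^\top)^{-1}A$ is symmetric). The offset formula then follows because $I-P=\Sigma A^\top(A\Sigma A^\top)^{-1}A$, which equals $(\mathrm{Proj}^{\Sigma}_{S_0^\perp})^\top$ upon taking $A^\top$ as a basis of the row space $S_0^\perp$ in the background projection formula and transposing. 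I expect the only real subtlety to be the bookkeeping between the $\Sigma$ and $\Sigma^{-1}$ inner products and the transpose appearing in the offset term --- in particular, confirming that the GLS-derived $(I-P)v$ coincides with $(\mathrm{Proj}^{\Sigma}_{S_0^\perp})^\top v$ rather than with $\mathrm{Proj}^{\Sigma}_{S_0^\perp}v$ --- while the statistical content is a direct application of Gauss--Markov and Lehmann--Scheff\'e.
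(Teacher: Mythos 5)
Your proof is correct and follows essentially the same route as the paper's: both reduce the affine-constrained problem to an unconstrained Gauss--Markov linear model by parametrizing the constraint set with a basis of $\ker A$ (the paper does this after whitening by $\Sigma^{-1/2}$ and then applying ordinary least squares, while you apply generalized least squares directly --- the same argument in different coordinates). If anything, your write-up is more complete than the paper's proof, which asserts without verification the covariance formula $\mathrm{Cov}(\hat\mu)=\mathrm{Proj}_{S_0}^{\Sigma^{-1}}\Sigma$, the closed-form expressions involving $\Sigma A^\top (A\Sigma A^\top)^{-1} A$, and the transpose bookkeeping $(I-\mathrm{Proj}_{S_0}^{\Sigma^{-1}})v=(\mathrm{Proj}^{\Sigma}_{S_0^\perp})^\top v$, all of which you check explicitly.
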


While Theorem \ref{thm:general} derives the BLUE, when  $n$ becomes large, computing the projection becomes expensive, especially in terms of memory which scales as $O(pn)$, where $p$ is the dimension of the rowspace or nullspace of $A$ -- whichever is smaller. Furthermore, the runtime for this estimator using matrix operations is approximately $O(n^{2.373})$ \citep{alman2021refined}; see Section \ref{s:complexity} for a discussion of computational complexity.

\begin{example}
 Applying Theorem \ref{thm:general} to our setting, we have $X=\tilde Y$, $\mu=Y$, and $\hat\mu=\hat Y$. In the constraints, $A$ is the matrix with entries in $\{-1,0,1\}$, which encodes the self-consistency constraints, and $b=0$. Note that the projections are orthogonal in the case that $\Sigma=cI$; in general, assumptions (A1)-(A3) do not ensure that the projections are orthogonal.
\end{example}

\begin{example}
    [Running Toy Example]\label{ex:toy}
    We introduce a toy example that we will use to illustrate Theorem \ref{thm:general}; later we will revisit this example to illustrate the SEA BLUE algorithm. Suppose that the only variable is $B$, which has 3 levels. The sensitive values are $Y^B=(5,10,15)^\top$, with $Y^\emptyset=30$. After adding independent discrete Gaussian noise with mean zero and variance 1 to each value, we observe the noisy counts $\twid Y^B=(6,9,17)^\top$ and $\twid Y^\emptyset=29$. Note that the noisy values are not self-consistent as $\twid Y^B_\bullet=32 \neq 29=\twid Y^\emptyset$. In the notation of Theorem \ref{thm:general}, we have $X=\binom{\twid Y^B}{\twid Y^\emptyset}$ with covariance $\Sigma=I$, and constraint $AX=0$ where $A=(1,1,1,-1)$. We calculate that 
    \[\mathrm{Proj}_{S_0}^I=(I-A^\top(AA^\top)^{-1}A)=\frac 14\left(\begin{array}{rrrr}
    3&-1&-1&1\\
    -1&3&-1&1\\
    -1&-1&3&1\\
    1&1&1&3\end{array}\right).\]
    Thus, by Theorem \ref{thm:general}, we have that $\binom{\hat Y^B}{\hat Y^\emptyset}=\mathrm{Proj}_{S_0}^I\binom{\twid Y^B}{\twid Y^\emptyset}=(5.25,8.25,16.25,29.75)^\top,$ which are self-consistent and BLUE.
\end{example}

In the following section, we propose an efficient implementation of BLUE under the additional assumptions of uncorrelated noise (A2) and equal variance within tables (A3). 

\section{Scalable Efficient Algorithm for BLUE}
In this section, we propose a novel two-step estimator that implements BLUE under assumptions (A1)-(A3) and gives a well-motivated heuristic when (A2) and (A3) do not hold. We call our method the Scalable Efficient Algorithm for the Best Linear Unbiased Estimator (SEA BLUE). The two steps of SEA BLUE are a collection step, which aggregates estimators of lower margins from more detailed tables, followed by a down pass, which projects the intermediate estimators resulting from the first step to be self-consistent with the previously calculated margins above. See Figure \ref{fig:Passes} for an illustration.

In Section \ref{s:collection}, we describe the collection step of SEA BLUE and prove in Theorem \ref{thm:total} that it results in the BLUE based on all estimators ``from below;'' in the case of the estimate of the total count, this is simply the BLUE. We also present a pseudo-code implementation of the collection step, which optimizes performance. In Section \ref{s:downstep} we describe the down pass and prove that, when applied to the output of the collection step, it results in the BLUE for every table. We also develop an efficient implementation of the down pass by optimizing the particular projection operations that are used in this step, and give a pseudo-code implementation. In Section \ref{s:complexity}, we compare both the runtime and memory complexity of SEA BLUE to the matrix projection, demonstrating the increased efficiency of SEA BLUE. 

 \begin{figure}[t]
     \centering
          \includegraphics[width=.35937\linewidth]{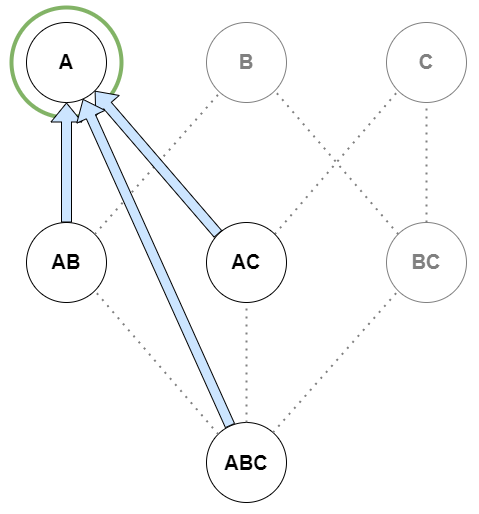}
     \includegraphics[width=.4492125\linewidth]{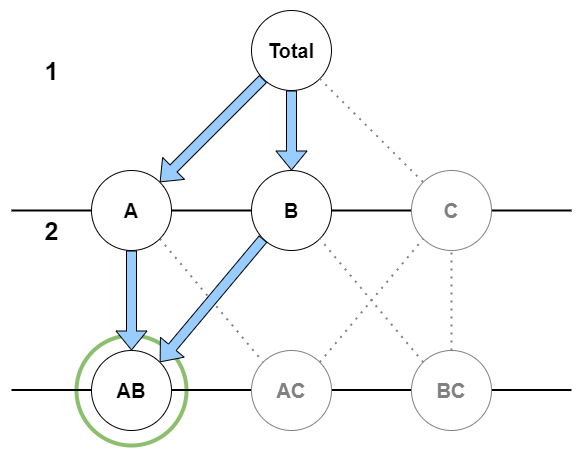}
     \caption{ Left: Illustration of the ``collection step,'' which updates  the estimates in the $A$ table using a weighted average of the $A$ margin calculated from $A$, $AB$, $AC$, and $ABC$ tables from the original noisy counts in $\twid Y$. Right: Illustration of the ``down pass,'' which  updates the $AB$ table to be consistent with the $A$ and $B$ tables via projection; prior to this, the $A$, $B$, and $C$ tables were projected to be consistent with the total. The down pass is applied to the intermediate estimates $\acute Y$ obtained from the collection step.}
     \label{fig:Passes}
 \end{figure}


\subsection{Collection Step}\label{s:collection}
The first step of SEA BLUE is called the \emph{collection step}, which considers estimators that can be  summed up from a single table,  and weights these to give preliminary estimators of each count, which we denote by $\acute Y_v^S$ for a table $S$ and index $v\in V^S$. The collection step was designed with the assumption of uncorrelated noise (A2) in mind, which removes the need to consider the covariance between estimators from different tables. By using inverse-variance weighting at the end of the collection step, the intermediate estimators are optimal ``from below.''  The collection step does not consider any estimators which leverage information between tables that are not hierarchically related, and the down pass will be needed to incorporate this information. 


\begin{defn}
    [Collection Step] Let $\twid Y_v^S = Y_{v}^S + N_v^S$ be noisy counts for $S\in \mscr O$ and $v\in V^S$ satisfying (A1). For any $S\subset \mscr A$ and $v\in V^S$, the \emph{collection step} produces intermediate estimates $\acute{Y}_v^S$ via the following formula:
\begin{equation}\label{eq:collection}
\acute Y_v^S = \frac{\sum\limits_{\ol S\in \mscr O | \ol S\supset S} \left(\var(\twid Y^{\ol S}_{v(\ol S)})\right)^{-1} \twid Y_{v(\ol S)}^{\ol S}}{\sum\limits_{\ol S\in \mscr O | \ol S\supset S} \left(\var(\twid Y^{\ol S}_{v(\ol S)})\right)^{-1}}.
\end{equation}
\end{defn}

Note that because the entries of the noisy counts $\twid Y$ have a known covariance, we can easily calculate $\var(\twid Y^{\ol S}_{v(\ol S)})$ exactly. If (A2) holds, then the variance of the collection step estimates is given by the following Lemma:
\begin{restatable}{lemma}{lemcollectionVar}\label{lem:collectionVar}
If (A1) and (A2) hold, then the variance of the entries of the intermediate estimate $\acute Y_v^S$, denoted by $ (\acute\sigma^2)^S_{v}$, is given by
\begin{equation}\label{eq:collectionVar} (\acute\sigma^2)^S_v \defeq \var(\acute Y_v^S)=\left(\sum\limits_{\ol S\in \mscr O | \ol S\supset S} \left(\var(\twid Y^{\ol S}_{v(\ol S)})\right)^{-1}\right)^{-1}.\end{equation}
\end{restatable}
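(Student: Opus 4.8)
The plan is to recognize the collection-step estimate $\acute Y_v^S$ as a fixed linear combination of observed noisy counts and then compute its variance directly, exploiting the inverse-variance weighting together with assumption (A2).

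First I would fix $S$ and $v\in V^S$ and abbreviate the weights by $w_R \defeq \l(\var(\twid Y^R_{v(R)})\r)^{-1}$ for each $R\in\mscr O$ with $R\supset S$, and set $W\defeq \sum_{R} w_R$. By (A1) each $\var(\twid Y^R_{v(R)})$ is finite and positive, so the $w_R$ and $W$ are well-defined and positive. Then equation \eqref{eq:collection} reads $\acute Y_v^S = W^{-1}\sum_R w_R\,\twid Y^R_{v(R)}$, a deterministic linear combination of the observed counts $\{\twid Y^R_{v(R)} : R\in\mscr O,\ R\supset S\}$.

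The key step is to argue that these counts are pairwise uncorrelated. Since $\twid Y^R_{v(R)} = Y^R_{v(R)} + N^R_{v(R)}$ with $Y$ deterministic, the only randomness comes from the noise, so $\cov(\twid Y^{R_1}_{v(R_1)}, \twid Y^{R_2}_{v(R_2)}) = \cov(N^{R_1}_{v(R_1)}, N^{R_2}_{v(R_2)})$. For distinct $R_1 \neq R_2$ these are noise variables attached to cells in different observed tables, hence distinct entries of the noise vector $N$; by assumption (A2) the entries of $N$ are uncorrelated, so this covariance vanishes. This is the only place the diagonal structure of $\Sigma$ is used, and it is the point I would flag as the one to verify carefully, since it is where the entire content of the lemma lies.

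Given pairwise uncorrelatedness, the variance of the weighted sum reduces to the sum of the squared-weight variances, and the inverse-variance choice of weights produces a clean collapse:
\[
\var(\acute Y_v^S) = \frac{1}{W^2}\sum_R w_R^2\,\var(\twid Y^R_{v(R)}) = \frac{1}{W^2}\sum_R w_R^2\,w_R^{-1} = \frac{1}{W^2}\sum_R w_R = \frac{1}{W}.
\]
Substituting back $W = \sum_{R\in\mscr O\mid R\supset S} \l(\var(\twid Y^R_{v(R)})\r)^{-1}$ gives exactly \eqref{eq:collectionVar}. There is no genuine obstacle beyond bookkeeping here: the computation is the standard simplification of an inverse-variance weighted average, and (A3) is not needed, as the argument depends only on the covariance structure through (A1)--(A2).
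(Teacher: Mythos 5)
Your proposal is correct and follows essentially the same route as the paper's proof: both express $\acute Y_v^S$ as an inverse-variance weighted linear combination of the aggregated counts $\twid Y^R_{v(R)}$, invoke (A2) to get zero covariance across distinct observed tables, and then carry out the standard variance collapse $W^{-2}\sum_R w_R = W^{-1}$. One small point of precision: each $\twid Y^R_{v(R)}$ is a sum of noise entries over the cells of table $R$ marginalizing to $v$ (not a single entry of $N$), but since distinct tables $R_1\neq R_2$ involve disjoint sets of entries of $N$, assumption (A2) still yields the pairwise uncorrelatedness you need, so your argument stands.
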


Under assumptions (A1) and (A2), the collection step produces an optimal estimator, using the estimates that are ``from below'' and ``within-table.''  
Under the assumption of equal variance within tables (A3),  the collection step is simply the optimal estimator ``from below.'' In particular, the collection step estimate of the total $Y^\emptyset$ is the  BLUE.

 \begin{restatable}{thm}{thmtotal}\label{thm:total}
 [Optimality for the Total]
Assume that (A1)-(A3) hold. Then, 
     $\acute Y^\emptyset$ is the BLUE of the total count $Y^\emptyset$ based on the values in $\twid Y^S$ for $S\in \mscr O$, subject to self-consistency. 
 \end{restatable}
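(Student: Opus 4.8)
The plan is to verify directly that $\acute Y^\emptyset$ satisfies the Gauss--Markov optimality characterization, rather than to compute the projection of Theorem~\ref{thm:general} in closed form. Writing $T_R \defeq \twid Y^R_{\emptyset(R)} = \sum_{v\in V^R}\twid Y^R_v$ for the total of the observed table $R$, the collection formula \eqref{eq:collection} at $S=\emptyset$ reads $\acute Y^\emptyset = \sum_{R\in\mscr O}\alpha_R T_R$ with $\alpha_R = \var(T_R)^{-1}/\sum_{R'}\var(T_{R'})^{-1}$. First I would record that each $T_R$ is unbiased for $Y^\emptyset$: self-consistency of the true counts gives $\EE T_R = \sum_v Y^R_v = Y^\emptyset$, so the convex combination $\acute Y^\emptyset$ is unbiased as well. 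It then remains to show $\acute Y^\emptyset$ is uncorrelated with every linear unbiased estimator of zero, which by Gauss--Markov is equivalent to being the BLUE.

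To describe the unbiased estimators of zero, I would parametrize the self-consistent mean vectors as $\mu = M\theta$, where $\theta\in\RR^{|V^{\mscr A}|}$ is the finest histogram over all of $\mscr A$ and $M$ stacks the marginalization maps, so that $(M\theta)^S_v = \sum_{t:\, t[S]=v}\theta_t$ and $Y^\emptyset = \sum_t \theta_t$. A linear statistic $\sum_{S,v} d^S_v\twid Y^S_v$ is unbiased for zero iff $M^\top d = 0$, i.e.\ iff for every detailed cell $t\in V^{\mscr A}$,
\[ \sum_{S\in\mscr O} d^S_{t[S]} = 0. \]
Using (A2), so that $\Sigma$ is diagonal, and (A3), so that $\var(\twid Y^S_v)=\sigma_S^2$ is constant within table $S$, a short computation with $\var(T_S)=|V^S|\sigma_S^2$ collapses the covariance to
\[ \cov\l(\acute Y^\emptyset,\ \textstyle\sum_{S,v} d^S_v\twid Y^S_v\r) = \sum_{S\in\mscr O}\alpha_S\sigma_S^2\, d^S_\bullet = \frac{1}{C}\sum_{S\in\mscr O}\frac{1}{|V^S|}\, d^S_\bullet, \]
where $d^S_\bullet=\sum_v d^S_v$ and $C=\sum_R\var(T_R)^{-1}$; the point is that inverse-variance weighting under (A3) turns each table weight into $\alpha_S\sigma_S^2 = (C|V^S|)^{-1}$.

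The crux is then to show this last sum vanishes whenever $d$ is unbiased for zero. I would sum the per-cell constraint $\sum_S d^S_{t[S]}=0$ over all detailed cells $t\in V^{\mscr A}$ and exchange the order of summation; for fixed $S$ and $v\in V^S$ exactly $|V^{\mscr A}|/|V^S|$ cells satisfy $t[S]=v$, so the double sum equals $|V^{\mscr A}|\sum_S|V^S|^{-1} d^S_\bullet$, forcing $\sum_S|V^S|^{-1}d^S_\bullet=0$ and hence zero covariance. Combined with unbiasedness, this establishes that $\acute Y^\emptyset$ is the BLUE. The main obstacle --- and the only place the structural assumptions are essential --- is precisely this alignment: the $|V^S|^{-1}$ weighting produced by inverse-variance combination under (A3) is exactly the reciprocal of the combinatorial multiplicity with which table $S$ appears across the detailed cells, so the two cancel. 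Without equal within-table variance (A3) the collapsed covariance would instead be $\sum_S\sum_v \alpha_S \var(\twid Y^S_v) d^S_v$ with cell-dependent weights, and this orthogonality would fail; I would flag that dependence explicitly to make clear why (A3) cannot be dropped here.
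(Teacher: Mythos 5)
Your proof is correct, and it reaches the paper's conclusion by a genuinely different route built on the same combinatorial core. The paper's proof constructs an explicit basis $b$ for the space of linear unbiased estimators of $Y^\emptyset$ (with the table totals $\twid Y^S_\bullet$ as its leading entries), proves that its covariance matrix $\Lambda$ is positive definite, quotes the generalized least squares formula $\ul 1^\top\Lambda^{-1}b/(\ul 1^\top\Lambda^{-1}\ul 1)$ for the BLUE, and reduces optimality of the collection estimate to the matrix identity that the collection weight vector times $\Lambda$ equals $\ul 1^\top$, verified entry by entry. You instead verify the classical orthogonality characterization of the BLUE (unbiased, and uncorrelated with every linear unbiased estimator of zero), which lets you dispense with the basis construction, the positive-definiteness argument, the GLS formula, and the paper's normalization bookkeeping; all you need is the parametrization of unbiased-for-zero statistics by the per-cell constraints $\sum_{S\in\mscr O}d^S_{t[S]}=0$. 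The decisive computation is then identical in both proofs: (A2) kills all cross-covariances, (A3) collapses the inverse-variance weights to $(C\lvert V^S\rvert)^{-1}$ --- the paper's step $\var(\twid Y^S_\bullet)=\lvert V^S\rvert\var(\twid Y^S_v)$ used in \eqref{eq:equalvar} --- and summing the per-cell unbiasedness constraint over detailed cells with multiplicity $\lvert V^{\mscr A}\rvert/\lvert V^S\rvert$ is exactly the over-counting correction in the paper's \eqref{eq:overcount}. What each approach buys: yours is shorter and more self-contained (modulo quoting the orthogonality criterion, which is standard linear-models material), while the paper's is more constructive, exhibiting the optimal weights explicitly through $\Lambda^{-1}$. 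Your closing remark on precisely where (A3) enters, and why orthogonality fails without it, matches the paper's Example \ref{ex:A3forCollection}.
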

 While Theorem \ref{thm:total} is explicitly stated in terms of the total $Y^\emptyset$, it implies that all collection estimates are optimal ``from below,'' by relabeling a count of interest $Y^S_v$ as $Y^\emptyset$ and only considering tables $\ol S\in \mscr O$ which satisfy $\ol S\supset S$.

 \begin{example}[Running Toy Example]\label{ex:toy2}
We apply the collection step to the setting in Example \ref{ex:toy}. Note that assumptions (A1)-(A3) all hold. Since there are no observed tables ``below'' $\twid Y^B$, we set $\acute Y^B=\twid Y^B$. Using inverse-variance weighting, we have 
\[\acute Y^\emptyset=\frac{(1/3)\twid Y^B_{\bullet}+(1)\twid Y^\emptyset}{4/3}=\frac{(1/3)(32)+29}{4/3}=29.75.\]
Note that we still do not have self consistency between $\acute Y^B$ and $\acute Y^\emptyset$.
 \end{example}

 Assumption (A2) is needed for Theorem \ref{thm:total} as the collection step only uses inverse-variance weighting, based on the assumption of uncorrelated noise. If noises are correlated, inverse-covariance weighting would be needed \citep{anderson2008multiple}, which would be far more computationally and memory expensive. Fortunately many DP statistics use uncorrelated noise, such as those produced by the U.S. Census Bureau, and prior works have also assumed this structure \citep{hay2010boosting,honaker2015efficient,kifer2021}. The following example demonstrates the suboptimality of $\acute Y^\emptyset$ when (A2) does not hold.  

 \begin{example}[Necessity of (A2) for Collection Step]\label{ex:A2forCollection}
Suppose that $|A|=|B|=2$, $\mscr O = \{\{A\},\{A,B\}\}$, $\twid Y^{A,B}_{i,j}$ are independent with variance 1, and $\twid Y^A_i=\twid Y^{A,B}_{i,\bullet}+N_i$, for $i=1,2$, where $N_i$ are independent of all other variables with variance $1$. Then the collection step produces $\acute Y^\emptyset = ((1/4)\twid Y^{A,B}_{\bullet,\bullet}+(1/6)\twid Y^A_\bullet)(1/4+1/6)=(3\twid Y^{A,B}_{\bullet,\bullet}+2\twid Y^{A}_{\bullet})/5$, which has variance $\var(\twid Y^{A,B}_{\bullet,\bullet})+\frac{4}{25}(\var(N_1)+\var(N_2))=4+8/25$. On the other hand, inverse-covariance weighting tells us that the BLUE for $Y^\emptyset$ is simply $\twid Y^{A,B}_{\bullet,\bullet}$, which has variance 4. Thus, $\acute Y^\emptyset$ is not BLUE.
 \end{example}

It may not be immediately obvious why (A3) is necessary to ensure the optimality of the collection step. However,  if (A3) does not hold, there may be important between-table estimators that are not considered by the collection step, as demonstrated in the following example. 

\begin{example}[Necessity of (A3) for Collection Step]\label{ex:A3forCollection}
Suppose that $|A|=|B|=2$, $\mscr O = \{\{A\},\{A,B\}\}$, $\var(\twid Y^A_1)=\var(\twid Y^{A,B}_{2,1})=\var(\twid Y^{A,B}_{2,2})=1$ and $\var(\twid Y^A_2)=\var(\twid Y^{A,B}_{1,1})=\var(\twid Y^{A,B}_{1,2})=11$. Then (A3) does not hold, since there are unequal variances within both the $\{A,B\}$ table and the $\{A\}$ table. Using \eqref{eq:collectionVar}, we compute 
\[\var(\acute Y^\emptyset)=\left(1/\var(\twid Y^{A}_\bullet)+1/\var(\twid Y^{A,B}_{\bullet,\bullet})\right)^{-1}=(1/12+1/24)^{-1}=8.\] However, a between-table estimator, $Z=\twid Y^A_1 + \twid Y^{A,B}_{2,1} + \twid Y^{A,B}_{2,2}$ has variance $\var(Z)=3$. So, $\acute Y^\emptyset$ is not the BLUE.    
\end{example}

 \subsubsection{Implementation of Collection Step}

    \begin{algorithm}[!ht]
        \begin{algorithmic}[1]
            \REQUIRE Noisy counts $\twid Y^S$ for each $S\in \mscr O$, and known variance $\var(\twid Y^S_v)$ for each $v\in V^S$.

            \STATE For each $S\in \mscr D$ and $v\in V^S$, set $C^S_v=D^S_v=0$.

        \FORALL{$S\in \mscr O$ (loop through observed tables and create a spanning set of margins)}
\STATE Define $\overline{\mscr D}_S\subset 2^S$ such that $\overline{\mscr D}_S\supset \mscr D\cap 2^S$, $\emptyset \in \overline {\mscr D}_S$ and for any $D\in \mscr D\cap 2^S$, there exists a nested sequence $(R_j)_{j=|D|}^{|S|}$ in $\overline{\mscr D}_S$ such that \begin{enumerate}
    \item $R_j\subset R_k$ for all $|D|\leq j\leq k\leq |S|$, 
    \vspace{-.25cm}
    \item $|R_j|=j$,
    \vspace{-.25cm}
    \item $R_{|D|}=D$.
\end{enumerate} 
 (that is, for any desired table $D$, involving the variables in $S$, we have a sequence of tables $(R_j)$ in $\overline {\mscr D}_S$ that increase one variable at a time from $D$ to $S$; we refer to $\overline {\mscr D}_S$ as a ``spanning set of margins'')
\FOR{$i=0,1,\ldots, |S|$ (number of variables to marginalize)}
\FORALL{$R\in \overline{\mscr D}_S$ such that $|S|=|R|-i$ (pick a set of variables not marginalized out)}
\FORALL{$v\in V^R$ (all counts in the table)}
\IF{$i=0$ (implies that $R=S$)}
\STATE Define $\twid Y^{S\mapsto S}_v = \twid Y^S_v$
\STATE Define $(\twid\sigma^2)^{S\mapsto S}_v = \var(\twid Y^S_v)$
\STATE  (the notation $S\mapsto R$ represents that we are producing estimates for the table $R$ based on margins calculated from table $S$)
\ELSE
\STATE There exists $\ol R\in \overline D_S$ such that $R\subset \ol R\subset S$ and $|\ol R|=|R|+1$
\STATE Define $\twid Y^{S\mapsto R}_v = \twid Y^{S\mapsto \ol R}_{v(\ol R)}$ (sum up values from a previously marginalized table)
\vspace{-.4cm}
\STATE Define $(\twid\sigma^2)^{S\mapsto R}_v = (\twid\sigma^2)^{S\mapsto \ol R}_{v(\ol R)}$ (sum up variances from previous margin)
\STATE  (we use a previously calculated margin to avoid re-summing the same quantities unnecessarily)
\ENDIF
\IF{$R\in \mscr D$ (if the table is one of the desired outputs)}
\STATE Update $C^R_v = C^R_v +  \twid Y^{S\mapsto R}_v/(\twid\sigma^2)^{S\mapsto R}_v$  (inverse-variance-weighted running sum)
\STATE Update $D^R_v = D^R_v + 1/(\twid\sigma^2)^{S\mapsto R}_v$  (running sum of the inverse-variances)
\ENDIF
\ENDFOR
\ENDFOR
\ENDFOR
\ENDFOR
\FORALL{$S\in \mscr D$ and $v\in V^S$}
\STATE Set $\acute Y^S_v = C^S_v/D^S_v$  (by Equation \eqref{eq:collection})
\STATE Set $(\acute \sigma^2)^S_v = 1/D^S_v$  (by Equation \eqref{eq:collectionVar})
\ENDFOR
        \end{algorithmic}
        \caption{Pseudo Code for Collection Step}
                \label{alg:collection}
    \end{algorithm}

The collection step \eqref{eq:collection} can be easily implemented by first iterating over all desired tables $S\in \mscr D$, and then over all observed tables $R\supset S$,  to produce the within-table estimates of $Y^S$. However, this implementation is inefficient as it fails to leverage the ability to reuse previous computations. In Algorithm \ref{alg:collection}, we describe an implementation that calculates \eqref{eq:collection} more efficiently. The idea is to first iterate over the observed tables, which allows us to leverage higher detailed margins when calculating lower detailed margins (e.g.,  first calculate the one-way margins and then calculate the total from the one-way margins). Algorithm \ref{alg:collection} offers the possibility of parallelization by working with different observed tables on different computing nodes and aggregating the summary values $C_v^S$, $D_v^S$ at the end. Note that the collection step can be further optimized by marginalizing out the variables with the largest number of levels first in order to reduce the number of counts that need to be marginalized over for lower level margins.

\subsection{Down Pass}\label{s:downstep}

In the down pass, we begin with the intermediate estimates produced by the collection step and iteratively project the $i$-way margins to be self-consistent with the previously computed $(i-1)$-way margin. The total ($0$-way margin) is simply equal to the end result of the collection step ($\hat Y^\emptyset = \acute Y^\emptyset$).

\begin{defn}[Down Pass]\label{def:downpass}
    Let $\acute Y^S_v$ be the output of the collection step, and assume that $(\acute \sigma^2)_v^S=\var(\acute Y^S_v)$ is given (such as by Lemma \ref{lem:collectionVar} using (A1) and (A2)). Then the \emph{down pass} produces final estimates $\hat Y^S$ that are defined recursively as follows:
    \begin{itemize}
        \item If $S=\emptyset$, then $\hat Y^\emptyset = \acute Y^\emptyset$.
        \item Otherwise, $\hat Y^S$ is defined to be the BLUE for $Y^S$, given the intermediate estimates $\acute Y^S$, subject to the constraint $\hat Y^S_{v(S)} = \hat Y^{\ul S}_v$ for all ${\ul S}\subsetneq S$ and $v\in V^{\ul S}$; that is, when marginalizing out the variables of $S\setminus {\ul S}$ in $\hat Y^S$, we obtain the same values as in $\hat Y^{\ul S}$.
    \end{itemize}
By Theorem \ref{thm:general}, we know that $\hat Y^S$ can be expressed as a projection, given in Algorithm \ref{alg:downpass}.
\end{defn}

From Definition \ref{def:downpass}, we see that the down pass is ``locally optimal'' in the sense that if we only had the intermediate estimate $\acute Y^S$, and optimal estimates $\hat Y^{\ul S}$ are treated as fixed for ${\ul S}\subsetneq S$, then $\hat Y^S$ is the optimal estimator. Under assumptions (A1)-(A3), we can in fact say that $\hat Y^S$ is BLUE compared to all linear estimators based on the noisy counts $\twid Y$.  

\begin{restatable}{thm}{thmOptimal}\label{thm:optimal}[Optimality of SEA BLUE]
Assume (A1)-(A3) hold. Then, $\hat Y^S$ resulting from applying the down pass to the output of the collection step is the best linear unbiased estimator of $Y^S$ based on the values in $\twid Y$, under self-consistency.
\end{restatable}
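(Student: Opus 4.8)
The plan is to argue by strong induction on the number of variables $k = |S|$ that the down-pass output $\hat Y^S$ coincides with the \emph{global} BLUE of $Y^S$ — the minimum-variance estimator among all linear, unbiased, self-consistent functions of the full noisy vector $(\twid Y^R)_{R \in \mscr O}$. Unbiasedness and self-consistency of $\hat Y^S$ are immediate, since every operation in both the collection step and the down pass is linear and unbiased, and the down pass enforces $\hat Y^S_{v(S)} = \hat Y^R_v$ by construction (Definition \ref{def:downpass}). Thus the real content of the theorem is the variance-optimality, and this is where the induction does its work.

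For the base case $k = 0$ we have $\hat Y^\emptyset = \acute Y^\emptyset$, which is exactly the global BLUE of the total by Theorem \ref{thm:total}. For the inductive step I would assume that $\hat Y^R$ is the global BLUE of $Y^R$ for every $R \subsetneq S$, and deduce the same for $\hat Y^S$. The first ingredient is the generalization of Theorem \ref{thm:total} noted after its statement: relabeling a target count $Y^S_v$ as a ``total'' and restricting to tables $R \in \mscr O$ with $R \supseteq S$ shows that $\acute Y^S$ is the BLUE of $Y^S$ based only on the information ``from below,'' i.e., on the noisy counts in tables refining $S$. The second ingredient is Theorem \ref{thm:general}, which expresses the global BLUE as the $\Sigma^{-1}$-orthogonal projection of the data onto the self-consistency subspace $S_0$. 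The heart of the argument is to show that this single global projection factors, level by level, into the local projections of the down pass: that combining the from-below-optimal $\acute Y^S$ with the already-computed coarser BLUEs $\{\hat Y^R\}_{R \subsetneq S}$ via the constrained projection of Definition \ref{def:downpass} reproduces the global BLUE of $Y^S$.

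Concretely, I would establish a covariance-decomposition lemma: under (A1)--(A3), the estimation error of $\acute Y^S$, after projection onto consistency with the coarser tables, is uncorrelated with the errors of the coarser BLUEs $\hat Y^R$, so that ``freezing'' those coarser estimates loses no information for the estimation of $Y^S$. Granting this orthogonality, the locally optimal estimator of Definition \ref{def:downpass} — the BLUE of $Y^S$ from $\acute Y^S$ subject to consistency with the fixed $\hat Y^R$ — is forced to agree with the globally optimal one by the Gauss--Markov characterization (residual $\Sigma^{-1}$-orthogonal to the constraint directions). Applying this identity level by level, starting from the total and descending through increasing $|S|$, assembles the full global projection, which is exactly the down pass.

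I expect the main obstacle to be precisely this orthogonal decomposition: proving that the ``from below'' information summarized by $\acute Y^S$ and the ``from above / between-table'' information carried by the coarser BLUEs are combinable without loss, so that the global projection separates into the sequence of local projections. This is exactly where (A2) and (A3) are indispensable — (A2) removes cross-table noise correlations, and (A3) guarantees (via Theorem \ref{thm:total}) that $\acute Y^S$ is genuinely the from-below BLUE and hence a sufficient summary of the refining tables. Example \ref{ex:A3forCollection} shows that without (A3) even the first step fails, since a between-table estimator unseen by the collection step can beat $\acute Y^S$; verifying that no analogous leakage survives under (A1)--(A3), so that the recursion closes, is the crux of the proof.
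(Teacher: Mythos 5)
Your skeleton matches the paper's: induction on $|S|$, base case $\hat Y^\emptyset = \acute Y^\emptyset$ via Theorem \ref{thm:total}, and an inductive step showing that the local, constrained projection of Definition \ref{def:downpass} applied to the from-below-optimal $\acute Y^S$ reproduces the global BLUE. However, there is a genuine gap: the entire content of the inductive step is contained in your ``covariance-decomposition lemma,'' which you name as the crux but never prove, nor even state precisely enough to be checkable. As formulated, it is also not sufficient. Showing that the error of the projected $\acute Y^S$ is uncorrelated with the errors of the coarser BLUEs $\hat Y^R$ does not by itself rule out improvement from other directions in the data space --- for instance from linear combinations involving tables $R \not\supset S$, or from between-table combinations within the refining tables that the collection step never forms. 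The BLUE characterization requires the error to be uncorrelated with \emph{every} linear unbiased estimator of zero compatible with the constraints, and your lemma only addresses a small subfamily of those.

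The paper closes this gap in two steps, both substantive. First, it invokes the result of \citet[Proof of Theorem 3]{hay2010boosting}: when the coarser margins $\hat Y^{T_i}$ are held fixed at their BLUE values, the BLUE of $Y^S$ is independent of $\twid Y^R$ for $R \not\supset S$ and solves a joint weighted least-squares problem \eqref{eq:optim} over \emph{all} refining tables $R \supset S$ simultaneously, subject to mutual self-consistency. This is what licenses ``freezing'' the coarser estimates --- it is a cited theorem, not a consequence of an uncorrelatedness claim. Second, and this is the bulk of the paper's proof, it writes out the Lagrangian \eqref{eq:lagrange} of that joint problem and, through a sequence of aggregation steps (summing the stationarity equations over fibers $\{w \in V^R : w[S]=v\}$, rescaling, and telescoping the multiplier terms using (A2) and (A3)), shows that the stationarity conditions for the $\hat Y^S$ block collapse exactly to equation \eqref{eq:1.3}: inverse-variance aggregation of the refining tables (the collection step) followed by projection onto consistency with the fixed margins (the down pass). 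That explicit decomposition of the joint optimality system is what your proposal hopes for but does not deliver; proving your lemma rigorously would require essentially this same computation, so the proposal as it stands is a correct plan with its hardest step missing rather than a proof.
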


\subsubsection{Efficient Implementation of the Down Pass}
 
 Next, we derive an efficient implementation of the projections required in the down pass: projecting a $k$-way table to be self-consistent with fixed higher margins. Assuming (A1)-(A3) hold, it follows that all of the elements of $\acute Y^S$ are independent with equal variance, and so the desired projection is an \emph{orthogonal projection}.

First, we require some technical notation. Let $S$ be a set of variables, and let $Z^S$ be a generic table on these variables. An important subspace is
 \begin{equation}\label{eq:S}\mscr S_S=\{Z^S|\text{for all } \ul S\subsetneq S, \text{ and all }v\in V^{\ul S}, \text{ we have }Z^S_{v(S)}=0\},\end{equation}
 which is the set of tables with all margins equal to zero. 

Given $\acute Y^S$, the output of the collection step, and $\hat Y^{\ul S}$ for ${\ul S}\subsetneq S$, the desired margin values, we construct two tables: $\hat Z^S$, a table consistent with $\hat Y^{\ul S}$ which does not use the values in $\acute Y^S$, and $\acute Z^S$, which has the same structure as $\hat Z^S$, but uses the  margins of $\acute Y^S$:

\begin{equation}
\label{eq:Zhat}
    \hat Z^S_v =\sum_{k=1}^{|S|}\sum_{\substack{\{A_1,\ldots, A_k\}\subset S\\ \ul S = S\setminus \{A_1,\ldots, A_k\}}} (-1)^{k+1} \frac{\hat Y^R_{v[\ul S]}}{|A_1|\cdots |A_k|},
\end{equation}

\begin{equation}
\label{eq:Zacute}
    \acute Z^S_v =\sum_{k=1}^{|S|}\sum_{\substack{\{A_1,\ldots, A_k\}\subset S\\ \ul S = S\setminus \{A_1,\ldots, A_k\}}} (-1)^{k+1} \frac{\acute Y^S_{v[\ul S](S)}}{|A_1|\cdots |A_k|}.
\end{equation}

Lemma \ref{lem:Z}  establishes that $\hat Z^S_v$ and $\acute Z^S$ have the desired margins and belong to $\mscr S^\perp_S$. 

\begin{restatable}{lemma}{lemZ}\label{lem:Z}
    Let $S\in \mscr D$ and assume that the $\hat Y^{\ul S}$'s are self-consistent for all margins $\ul S\subsetneq S$. Let $\hat Z^S$ be constructed as in \eqref{eq:Zhat} and $\acute Z^S$ be from \eqref{eq:Zacute}. Then we have that 
    \begin{enumerate}
        \item for all $\ul S\subsetneq S$ and all $v\in V^{\ul S}$, $\hat Z^S_{v(S)}=\hat Y^{\ul S}_v$ and $\acute Z^S_{v(S)}=\acute Y^S_{v(S)}$, and 
    \item both $\hat Z^S,\acute Z^S\in \mscr S_S^\perp$.
    \end{enumerate}
\end{restatable}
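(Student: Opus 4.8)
The plan is to establish the two claims separately, handling the orthogonality claim (2) first since it is essentially immediate, and then the margin identity (1) by an inclusion--exclusion argument.

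For claim (2), I would first observe that every summand appearing in \eqref{eq:Zhat} and \eqref{eq:Zacute} is, viewed as a function of the index $v\in V^S$, a table that depends only on the coordinates $v[S\setminus K]$ for some nonempty $K$; that is, it depends only on a proper subset $R=S\setminus K\subsetneq S$ of the variables. Hence it suffices to show that any table of the form $Z^S_v=f(v[R])$ with $R\subsetneq S$ lies in $\mscr S_S^\perp$, and then conclude by linearity. Under (A1)--(A3) the relevant inner product is the standard one, and this reduces to a one-line computation: for any $W\in\mscr S_S$,
\[
\langle Z^S,W\rangle=\sum_{v\in V^S}f(v[R])\,W_v=\sum_{u\in V^R}f(u)\sum_{v:\,v[R]=u}W_v=\sum_{u\in V^R}f(u)\,W_{u(S)}=0,
\]
since $W_{u(S)}=0$ for all $u\in V^R$ by the definition of $\mscr S_S$ (as $R\subsetneq S$). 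Applying this to each summand of $\hat Z^S$ and $\acute Z^S$ yields $\hat Z^S,\acute Z^S\in\mscr S_S^\perp$.

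For claim (1), I would fix a proper subset $T=R\subsetneq S$ and $u\in V^T$, write $M=S\setminus T$, and compute the margin $\hat Z^S_{u(S)}=\sum_{w\in V^M}\hat Z^S_{(u,w)}$ by substituting \eqref{eq:Zhat} and exchanging the order of summation. The key bookkeeping step is to split each marginalized set $K$ as $K=K_T\sqcup K_M$ with $K_T=K\cap T$ and $K_M=K\cap M$; summing a term $\hat Y^{S\setminus K}_{(u,w)[S\setminus K]}$ over $w\in V^M$ then produces a factor $\prod_{A\in K_M}|A|$ from the free coordinates and, crucially, marginalizes $\hat Y^{S\setminus K}$ down to the variables $T\setminus K_T$. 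Here I would invoke the hypothesis that the $\hat Y^R$ are self-consistent for all $R\subsetneq S$ to replace this partial margin by $\hat Y^{T\setminus K_T}_{u[T\setminus K_T]}$. After cancelling $\prod_{A\in K_M}|A|$ against the weight $1/\prod_{A\in K}|A|$, each summand depends on $K$ only through $K_T$, so the inner sum over $K_M$ collapses by the elementary identity $\sum_{K_M\subset M}(-1)^{|K_M|}=0$ (valid since $M\neq\emptyset$ because $T\subsetneq S$). This kills every term with $K_T\neq\emptyset$ and leaves precisely the $K_T=\emptyset$ contribution, which evaluates to $\hat Y^T_u$. The computation for $\acute Z^S$ is identical, except that the inner margins are margins of the single table $\acute Y^S$, which are self-consistent by construction, so no external hypothesis is needed and the surviving term is $\acute Y^S_{u(S)}$.

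The main obstacle is the inclusion--exclusion in claim (1): one must carefully track how the nested marginalizations interact with the signed weights $(-1)^{|K|+1}/\prod_{A\in K}|A|$, and correctly recognize that the self-consistency hypothesis is exactly what licenses rewriting the partially-marginalized $\hat Y^{S\setminus K}$ as the lower-order table $\hat Y^{T\setminus K_T}$. Getting the index manipulations across $V^S$, $V^M$, and $V^{S\setminus K}$ right --- together with the empty-product and empty-set edge cases --- is the only delicate part; everything else is routine.
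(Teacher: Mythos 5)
Your proof is correct. For part (2) your argument is essentially the paper's: the paper likewise pairs $\hat Z^S$ against an arbitrary $Z^0\in\mscr S_S$, exchanges the order of summation, and observes that the inner sums are margins of $Z^0$, hence zero; your only variation is to abstract this into the statement that any table of the form $v\mapsto f(v[R])$ with $R\subsetneq S$ lies in $\mscr S_S^\perp$, and conclude by linearity. For part (1) you take a genuinely different computational route. The paper marginalizes out a \emph{single} variable $A^*$: terms with $A^*$ in the surviving set drop one variable by self-consistency, terms without it pick up a factor $|A^*|$ that cancels the weight, and the whole expression telescopes to $\hat Y^{S\setminus\{A^*\}}_v$; self-consistency at deeper margins then follows by chaining through the (already mutually self-consistent) $\hat Y^R$'s. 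You instead compute an arbitrary margin $T\subsetneq S$ in one shot, splitting each marginalized set $K$ as $K_T\sqcup K_M$ and killing every $K_T\neq\emptyset$ term via $\sum_{K_M\subset M}(-1)^{|K_M|}=0$. Both are valid: the paper's telescoping keeps the bookkeeping light and leans on the chaining step, while your global inclusion--exclusion is more self-contained (each margin is obtained directly, with no induction on margin depth) at the cost of heavier index management, which you handle correctly --- including the edge case $K_T=\emptyset$, $K_M\neq\emptyset$, whose signed sum produces the surviving coefficient $1$ in front of $\hat Y^T_u$. One cosmetic remark: the orthogonal complement $\mscr S_S^\perp$ in the lemma is taken with respect to the standard inner product irrespective of (A2)--(A3) (this is exactly what Theorem \ref{thm:FastProj} exploits), so your caveat invoking those assumptions in part (2) is unnecessary, though harmless.
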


    \begin{algorithm}[t]
        \begin{algorithmic}[1]
            \REQUIRE (A1) holds; $\acute Y^S$ and $(\acute \sigma^2)^S$ for each $S\in \mscr D$ as computed from the collection step
            
            \STATE Set $\hat{Y}^\emptyset = \acute Y^\emptyset$, which is the total estimate


            \FOR{$i=1,2,\ldots, \max_{S\in \mscr D} |S|$ (iteratively go through $i$-way margins)}
            \FORALL{$S\in \mscr D$ such that $|S|=i$ (pick a set of $i$-way margins)}
            \STATE Let $\hat Z^S$ be the table constructed in \eqref{eq:Zhat}
            \IF{(A1)-(A3) hold}
            \STATE Let $\acute Z^S$ be the table constructed in \eqref{eq:Zacute}
            \STATE Set $\hat Y^S = \acute Y^S-\acute Z^S+\hat Z^S$  (by Theorem \ref{thm:FastProj})
            \ELSIF{(A1)-(A2) hold}
            \STATE Call $\acute \Sigma=\mathrm{diag}((\acute \sigma^2)^S)$, which is the covariance matrix of the elements of $\acute Y^S$
            \STATE Set $
                \hat Y^S = \mathrm{Proj}_{ \mscr S_S}^{\acute \Sigma^{-1}}(\acute Y^S) + \hat Z^S
                -\mathrm{Proj}_{\mscr S_S}^{\acute \Sigma^{-1}}(\hat Z^S)$  (projections are no longer orthogonal)
            \ENDIF
                
                \ENDFOR
                \ENDFOR
        \end{algorithmic}
         \caption{Pseudo Code for Down Pass}
                \label{alg:downpass}
    \end{algorithm}

Not only do the constructions $\hat Z^S$ and $\acute Z^S$ have the desired margins and live in $\mscr S^\perp_S$, but in fact, $\acute Z^S$ is the projection of $\acute Y^S$ onto $\mscr S^\perp_S$. This allows us to write the elegant formula $\hat Y^S = \acute Y^S-\acute Z^S+\hat Z^S$ established in the following Theorem:

\begin{restatable}
    {thm}{thmFastProj}\label{thm:FastProj}
    Assume (A1) holds, let $\acute Y^S$ be a table achieved after the collection step, and let $\hat Z^S$ be the result of \eqref{eq:Zhat} using $\hat Y^{\ul S}$ for ${\ul S}\subsetneq S$. Let $\acute Z^S$ be defined as in \eqref{eq:Zacute} using the values in $\acute Y^S$. Then,
    \begin{enumerate}
        \item $\acute Z^S = \mathrm{Proj}^I_{\mscr S^\perp_S} \acute Y^S$ (equivalently, $\acute Y^S-\acute Z^S = \mathrm{Proj}^I_{\mscr S_S} \acute Y^S$), and
        \item if (A2) and (A3) hold, then $\hat Y^S = \acute Y^S-\acute Z^S+\hat Z^S$.
    \end{enumerate}
\end{restatable}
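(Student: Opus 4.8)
The plan is to let Lemma \ref{lem:Z} carry the combinatorial cancellation identities for $\hat Z^S$ and $\acute Z^S$, and reduce both parts to standard facts about orthogonal projections. For Part 1, I would invoke the elementary characterization that for a subspace $W$, the orthogonal projection satisfies $\mathrm{Proj}^I_W x = w$ if and only if $w\in W$ and $x-w\in W^\perp$. Applying this with $W=\mscr S_S^\perp$ and $x=\acute Y^S$, it suffices to verify that $\acute Z^S\in\mscr S_S^\perp$ and that $\acute Y^S-\acute Z^S\in(\mscr S_S^\perp)^\perp=\mscr S_S$. The first is exactly Lemma \ref{lem:Z}(2). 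For the second, Lemma \ref{lem:Z}(1) gives $\acute Z^S_{v(S)}=\acute Y^S_{v(S)}$ for every $R\subsetneq S$ and $v\in V^R$, so $\acute Y^S-\acute Z^S$ has all margins equal to zero and therefore lies in $\mscr S_S$ by the definition \eqref{eq:S}. This yields $\acute Z^S=\mathrm{Proj}^I_{\mscr S^\perp_S}\acute Y^S$, and the parenthetical equivalent form follows from $\mathrm{Proj}^I_{\mscr S_S}=I-\mathrm{Proj}^I_{\mscr S_S^\perp}$.

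For Part 2, I would first argue that under (A1)--(A3) the down-pass projection is genuinely \emph{orthogonal}. By Lemma \ref{lem:collectionVar} together with (A3), the variance $(\acute\sigma^2)^S_v$ is constant in $v$: the index set $\{R\in\mscr O\mid R\supset S\}$ is fixed for fixed $S$, and each summand $\var(\twid Y^R_{v(R)})$ is a sum of $\prod_{A\in R\setminus S}|A|$ equal-variance, uncorrelated terms, hence independent of $v$. Under (A2) the entries of $\acute Y^S$ are moreover mutually uncorrelated, so $\cov(\acute Y^S)$ is a scalar multiple of the identity and the inner product in Theorem \ref{thm:general} collapses to the standard one. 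By Definition \ref{def:downpass} and Theorem \ref{thm:general}, $\hat Y^S$ is therefore the orthogonal projection of $\acute Y^S$ onto the affine constraint set $\mscr C_S=\{Z^S\mid Z^S_{v(S)}=\hat Y^R_v \text{ for all } R\subsetneq S,\ v\in V^R\}$.

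The key step is to identify $\mscr C_S$ as an affine translate of $\mscr S_S$. Lemma \ref{lem:Z}(1) shows $\hat Z^S\in\mscr C_S$, and the difference of any two elements of $\mscr C_S$ has all margins zero, so $\mscr C_S=\hat Z^S+\mscr S_S$. Taking $S_0=\mscr S_S$ and the particular solution $\hat Z^S$ in Theorem \ref{thm:general} (with $\Sigma\propto I$, so the transposed projection is itself), and using $\mathrm{Proj}^I_{\mscr S_S^\perp}(\hat Z^S)=\hat Z^S$ since $\hat Z^S\in\mscr S_S^\perp$ by Lemma \ref{lem:Z}(2), gives
\[\hat Y^S=\mathrm{Proj}^I_{\mscr S_S}(\acute Y^S)+\mathrm{Proj}^I_{\mscr S_S^\perp}(\hat Z^S)=\mathrm{Proj}^I_{\mscr S_S}(\acute Y^S)+\hat Z^S.\]
Finally, Part 1 identifies $\mathrm{Proj}^I_{\mscr S_S}(\acute Y^S)=\acute Y^S-\acute Z^S$, yielding $\hat Y^S=\acute Y^S-\acute Z^S+\hat Z^S$.

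The main obstacle I anticipate is not the projection algebra, which is routine once Lemma \ref{lem:Z} is available, but rather pinning down that (A3) forces the \emph{collected} estimates $\acute Y^S$ to have constant within-table variance, so that the BLUE projection is orthogonal with respect to the standard inner product. This is exactly where (A2)--(A3) enter in Part 2 and is what makes the subtracted term vanish and the clean subtraction formula $\acute Y^S-\acute Z^S+\hat Z^S$ valid; Part 1, by contrast, is purely geometric and needs only the constructions from (A1).
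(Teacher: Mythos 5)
Your proposal is correct, and Part 2 follows essentially the same route as the paper: invoke Theorem \ref{thm:general} with the affine constraint set, use (A2)--(A3) to reduce $\acute\Sigma$ to a scalar multiple of the identity so the BLUE becomes a standard orthogonal projection, use $\hat Z^S$ as the particular solution, and then apply Part 1. In fact, your justification that $(\acute\sigma^2)^S_v$ is constant in $v$ (fixed index set $\{R\in\mscr O\mid R\supset S\}$, each $\var(\twid Y^R_{v(R)})$ a sum of $\prod_{A\in R\setminus S}|A|$ equal-variance uncorrelated terms) spells out a step the paper merely asserts.

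Where you genuinely diverge is Part 1. The paper proves an operator-level statement: it defines $P$ as the map $\acute Y^S\mapsto\acute Z^S$ and shows $P=\mathrm{Proj}^I_{\mscr S^\perp_S}$ by verifying (a) $\mscr S_S$ lies in the kernel, (b) $\mscr S_S^\perp$ is fixed pointwise, and (c) $P$ is self-adjoint, the last requiring a separate basis-decomposition computation. You instead apply the uniqueness characterization of the orthogonal projection of a single vector: $\acute Z^S\in\mscr S_S^\perp$ (Lemma \ref{lem:Z}(2)) and $\acute Y^S-\acute Z^S\in(\mscr S_S^\perp)^\perp=\mscr S_S$ because the two tables share all margins (Lemma \ref{lem:Z}(1)). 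This is shorter, avoids the self-adjointness verification entirely, and suffices for the theorem as stated; moreover, since Lemma \ref{lem:Z} applies to any table's own margins, your pointwise argument yields the same operator identity anyway. What the paper's route buys is an explicit structural picture of $P$ (idempotent, symmetric, with identified kernel and range), which is conceptually useful but not needed here; your route buys economy, delegating all the combinatorial content to Lemma \ref{lem:Z} where it already lives.
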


 The intuition behind the formula $\hat Y^S = \acute Y^S-\acute Z^S+\hat Z^S$ is that $\acute Y^S$ has the correct dependence structure, but the wrong margins; so, we ``subtract off the margins'' by subtracting $\acute Z^S$ and then ``add on the correct margins'' by adding $\hat Z^S$.

In Example \ref{ex:Z}, we illustrate how the terms in $\hat Z^S$ cancel when calculating margins. 
\begin{example}\label{ex:Z}
    Let $S = \{A,B,C\}$, and suppose that the $\hat Y^R$'s are self-consistent for all $R\subsetneq S$ (e.g., $\hat Y^{A,B}_{a,\bullet}=\hat Y^A_a$). The construction of $\hat Z^S$ in \eqref{eq:Zhat} is of the form
    \begin{align*}
        \hat Z^S_{a,b,c} &= \left(\frac{\hat Y^{A,B}_{a,b}}{|C|} + 
        \frac{\hat Y^{A,C}_{a,c}}{|B|} + \frac{\hat Y^{B,C}_{b,c}}{|A|}\right)-\left(\frac{\hat Y^{A}_{a}}{|B||C|} + \frac{\hat Y^{B}_b}{|A||C|} + \frac{\hat Y^{C}_{c}}{|A||B|}\right)+\frac{\hat Y^\emptyset}{|A||B||C|}.
    \end{align*}
    If we marginalize over the variable $C$, we obtain:
    \begin{align*}
        \hat Z^S_{a,b,\bullet} &=\left(\hat Y^{A,B}_{a,b} + \frac{\hat Y^{A}_a}{|B|}+\frac{\hat Y^{B}_b}{|A|}\right)-\left(\frac{\hat Y^{A}_a}{|B|} + \frac{\hat Y^{B}_b}{|A|} + \frac{\hat Y^\emptyset}{|A||B|}\right)+\frac{\hat Y^{\emptyset}}{|A||B|},
    \end{align*}
    and we see that all terms cancel except for $\hat Y^{AB}_{a,b}$ as claimed in Lemma \ref{lem:Z}.
\end{example}

\begin{remark}\label{rem:downpass}
    For each step of the down pass, the projection only requires calculating the current margins of $\acute Y^S$ and the cost to assemble $\acute Z^S$ and $\hat Z^S$. In fact, the computational and memory cost for the down pass is of the same rate as the collection step. 
\end{remark}

Algorithm \ref{alg:downpass} gives a pseudo-code implementation of the down pass. Note that while each set of margins must be computed in order, margins with the same number of variables can be calculated in parallel.

\begin{example}\label{ex:toy3}
We continue Example \ref{ex:toy2} by now applying the down pass. First, we calculate $\acute Z^S=(32/3,32/3,32/3)^\top$, since $32=\acute Y^B_\bullet$ and $|B|=3$, and $\hat Z^S=(29.75/3,29.75/3,29.75/3)^\top$, since $\hat Y^\emptyset=\acute Y^\emptyset=29.75$ and $|B|=3$. Thus, 
\[\hat Y^B = \acute Y^B-\acute Z^S+\hat Z^S=
\left(\begin{array}{c}6\\9\\17\end{array}\right)-
\left(\begin{array}{c}32/3\\32/3\\32/3\end{array}\right)+
\left(\begin{array}{c}29.75/3\\29.75/3\\29.75/3\end{array}\right) 
= \left(\begin{array}{c}5.25\\8.25\\16.25\end{array}\right),\]
which agrees with the BLUE calculated in Example \ref{ex:toy}.
\end{example}

\subsection{Complexity of SEA BLUE and Matrix Projection}\label{s:complexity}
In this section, we give a detailed analysis of the computation and memory complexity of the matrix projection in Theorem \ref{thm:general} compared to SEA BLUE. For simplicity, we restrict our attention to the case where noisy tables are observed for all possible margins based on $k\geq 1$ variables with $I\geq 2$ levels each and assumptions (A1)-(A3) hold. In this case, we can calculate that there are $n=(I+1)^k$ noisy counts observed. We consider two asymptotic regimes where either $k$ or $I$ goes to infinity while the other is held fixed. 

{\bf Matrix Projection: } The projection in Theorem \ref{thm:general} requires $O(np+p^{2.373})$ operations to implement \citep{alman2021refined} and $O(np)$ units of memory, where $p$ is the smaller dimension of either the columnspace or nullspace of $A$ ($A$ is defined in Theorem \ref{thm:general}). 

\begin{restatable}
    {lemma}{lemP}\label{lem:P}
    Given all possible tables from $k$ variables, each with $I$ levels, 
    \begin{enumerate}
    \item $p = \min\{I^k,(I+1)^k-I^k\},$
    where $p$ is defined as above, and
    \item matrix projection has runtime  $O(I^{2k-1}+I^{2.373(k-1)})$ and memory $O(I^{2k-1})$ as $I\rightarrow \infty$, and runtime $O(I^k(I+1)^k+I^{2.373k})$ and memory $O(I^k(I+1)^k)$ as $k\rightarrow \infty$.
    \end{enumerate}
\end{restatable}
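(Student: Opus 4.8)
The plan is to first pin down the value of $p$ by a dimension count on the constraint matrix $A$, and then substitute this value into the known projection costs (runtime $O(np+p^{2.373})$ and memory $O(np)$), simplifying asymptotically in each of the two regimes.

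For part 1, recall from Theorem \ref{thm:general} that $p = \min\{\dim(\text{rowspace } A), \dim(\text{nullspace } A)\} = \min\{r, n-r\}$, where $r = \mathrm{rank}(A)$ and $n = (I+1)^k$. The key observation is that the nullspace $S_0 = \{x \mid Ax = 0\}$ is exactly the space of globally self-consistent configurations across all margins, since the self-consistency equations are homogeneous. First I would argue that any such configuration is completely determined by its finest $k$-way table $Y^{\{A_1,\ldots,A_k\}}$, because every coarser margin $Y^S$ is obtained by summing out the variables in $\{A_1,\ldots,A_k\}\setminus S$; conversely, any assignment of the $I^k$ entries of the finest table extends uniquely to a self-consistent configuration, and those entries are free (there are no constraints among them). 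Hence the map sending a self-consistent vector to its $k$-way table is a linear isomorphism, so $\dim S_0 = I^k$, giving $n - r = I^k$ and $r = (I+1)^k - I^k$. This yields $p = \min\{I^k, (I+1)^k - I^k\}$ as claimed.

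For part 2, I would substitute this value of $p$ together with $n = (I+1)^k$ into the projection costs, and in each regime determine which branch of the minimum is active. As $I \to \infty$ with $k$ fixed, the binomial expansion gives $(I+1)^k - I^k = k I^{k-1} + O(I^{k-2}) = \Theta(I^{k-1})$, which is smaller than $I^k$, so $p = \Theta(I^{k-1})$; using $(I+1)^k = \Theta(I^k)$, the runtime becomes $O(I^{2k-1} + I^{2.373(k-1)})$ and the memory becomes $O(I^{2k-1})$. As $k \to \infty$ with $I$ fixed, comparing $I^k$ with $(I+1)^k - I^k = I^k\bigl((1+1/I)^k - 1\bigr)$ shows the latter eventually exceeds the former, since $(1+1/I)^k \to \infty$, so $p = I^k$; then $np = I^k(I+1)^k$ and $p^{2.373} = I^{2.373k}$, giving runtime $O(I^k(I+1)^k + I^{2.373k})$ and memory $O(I^k(I+1)^k)$.

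The main obstacle is the dimension count in part 1 — specifically, the clean argument that the nullspace of $A$ coincides with the set of globally self-consistent tables and that this set is freely parametrized by the $I^k$ entries of the finest table. Once $p$ is identified, everything else reduces to routine binomial asymptotics and bookkeeping on which term in the minimum dominates.
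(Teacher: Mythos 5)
Your proposal is correct, and part 1 takes a genuinely different route from the paper. The paper obtains $\dim S_0 = I^k$ by a table-by-table decomposition: it invokes Lemma \ref{lem:basis}, which shows that the ``new'' degrees of freedom in an $i$-variable table, once it is forced to be self-consistent with the margins one level above, form a space of dimension $(I-1)^i$; summing these contributions over all $\binom{k}{i}$ tables and applying the binomial theorem gives $\sum_{i=0}^k \binom{k}{i}(I-1)^i = I^k$. You instead argue globally: the homogeneous self-consistency constraints make $S_0$ exactly the set of configurations in which every table is the appropriate margin of the finest $k$-way table, so restriction to the finest table is a linear isomorphism $S_0 \cong \RR^{I^k}$ (injective because zero detail forces zero margins, surjective because any detailed table generates a self-consistent family by summation), and rank--nullity then yields $r = (I+1)^k - I^k$. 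Your argument is more elementary and self-contained --- it sidesteps the somewhat delicate claim that the per-table free dimensions are ``linearly independent'' across tables --- while the paper's route reuses Lemma \ref{lem:basis}, which it needs elsewhere anyway (it supplies an explicit basis of $\mscr S_S$ underpinning the down-pass analysis) and which also covers unequal numbers of levels per variable. Your part 2 coincides with the paper's: both substitute $p = \Theta(I^{k-1})$ (as $I\rightarrow\infty$) or $p = I^k$ (as $k\rightarrow\infty$) into the costs $O(np + p^{2.373})$ and $O(np)$ from Theorem \ref{thm:general}, with your determination of which branch of the minimum is active being slightly more explicit than the paper's.
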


{\bf SEA BLUE: } In both the collection step and the down pass, it can be seen that the memory requirement never surpases a constant multiple times $n$, a large improvement over the rate $O(pn)$ for the matrix projection. 

In terms of the runtime, for the collection step, the bottleneck is the time to compute all margins from all of the tables. In Lemma \ref{lem:Ztime}, found in the Supplementary Materials, it is established that under (A1)-(A3), the runtime of the down pass is limited by the same quantity. In Lemma \ref{lem:CollectionCost}, we derive the computational cost and memory to compute all margins given either a single table, or all tables from $k$ variables.

\begin{restatable}
    {lem}{lemCollectionCost}\label{lem:CollectionCost}
    \begin{enumerate}
    \item Let $S$ be a table with $k$ variables, each with $I$ levels. Then it is possible to compute all of its margins in $O(I^k)$ time as $I\rightarrow \infty$ and $O((I+1)^k)$ time as $k\rightarrow \infty$; the memory requirement is of the same order.
    \item Given all possible tables from $k$ variables, each with $I$ levels, it is possible to compute all of its margins in $O(I^k)=O(n)$ time as $I\rightarrow \infty$ and $O((2I+1)^k)=O(n^{\log(2I+1)/\log(I+1)})$ time as $k\rightarrow \infty$; under (A1)-(A3), the memory requirement of SEA BLUE is $O(n)$.
    \end{enumerate}
\end{restatable}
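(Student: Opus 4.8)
The plan is to treat both statements as exercises in counting additions and memory cells, and to reduce the bookkeeping to two elementary binomial identities: $\sum_{s=0}^{k}\binom{k}{s}I^{s}=(I+1)^{k}$ and $\sum_{r=0}^{k}\binom{k}{r}(2I)^{r}=(2I+1)^{k}$. The crucial algorithmic idea underlying both parts, and exactly what Algorithm \ref{alg:collection} implements, is to never recompute a margin from scratch: a margin on a set $S$ is obtained from an already-computed margin on a set $S'$ with $S\subset S'$ and $|S'|=|S|+1$ by summing out a single variable, at a cost equal to the number of cells of the parent table.

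For part 1, I would organize the computation along the subset lattice of $S$, processing subsets in decreasing order of size. The top table ($S$ itself, with $I^{k}$ cells) is given, and every proper subset $R\subsetneq S$ with $|R|=m$ is computed from one chosen parent of size $m+1$ by marginalizing a single variable; since the parent has $I^{m+1}$ cells and each output cell is a sum of $I$ inputs, this step costs $O(I^{m+1})$. Summing over the lattice gives total time $\sum_{m=0}^{k-1}\binom{k}{m}I^{m+1}=I\big[(I+1)^{k}-I^{k}\big]$. I would then read off the two regimes: with $k$ fixed and $I\to\infty$ the bracket is $kI^{k-1}+O(I^{k-2})$, so the bound is $O(I^{k})$; with $I$ fixed and $k\to\infty$ the term $(I+1)^{k}$ dominates, giving $O((I+1)^{k})$. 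For memory, the total number of cells across all margins is $\sum_{m=0}^{k}\binom{k}{m}I^{m}=(I+1)^{k}$, which is $O(I^{k})$ as $I\to\infty$ and $O((I+1)^{k})$ as $k\to\infty$, matching the time order.

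For part 2, I would apply part 1 to each observed table and sum over the lattice of all $2^{k}$ tables. The cleanest route to the stated bound uses the crude per-margin estimate that each of the $2^{|R|}$ margins of a table $R$ costs at most $|V^{R}|=I^{|R|}$ to form, so the work for table $R$ is at most $(2I)^{|R|}$ and the total is $\sum_{R}(2I)^{|R|}=\sum_{r=0}^{k}\binom{k}{r}(2I)^{r}=(2I+1)^{k}$ (the DP implementation of Algorithm \ref{alg:collection} is in fact faster, so this is a valid upper bound). With $k$ fixed and $I\to\infty$ the factor $2^{k}$ is constant, so this is $O(I^{k})=O(n)$ since $n=(I+1)^{k}=\Theta(I^{k})$; with $I$ fixed and $k\to\infty$ it is $O((2I+1)^{k})$, which I would rewrite as $n^{\log(2I+1)/\log(I+1)}$ using $k=\log n/\log(I+1)$. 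For the memory claim I would argue that SEA BLUE never materializes all margins of all tables at once: the collection step only maintains the accumulators $C^{S}_{v}$ and $D^{S}_{v}$ from \eqref{eq:collection}, whose total size is $\sum_{S\in\mscr D}|V^{S}|=(I+1)^{k}=n$, and it processes one observed table at a time using temporary space bounded by that table's total margin count, which is at most $n$ by part 1. Under (A1)-(A3) the down pass uses the explicit formula $\hat Y^{S}=\acute Y^{S}-\acute Z^{S}+\hat Z^{S}$ of Theorem \ref{thm:FastProj}, which also needs only $O(n)$ storage (and, by Remark \ref{rem:downpass}, the same time rate), so the whole procedure stays at $O(n)$ memory.

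The combinatorics above is routine; the two places requiring care are (i) guaranteeing the tight single-table rates in part 1, which is precisely the point of marginalizing one variable at a time from a parent only one level larger---marginalizing each subset directly from the full table would instead cost $\sum_{m}\binom{k}{m}I^{k}=(2I)^{k}$ and destroy the $O((I+1)^{k})$ bound as $k\to\infty$---and (ii) ensuring the implicit big-$O$ constants are uniform across the summed terms so the binomial identities may be applied term by term. I expect (i) to be the main conceptual obstacle, since it is the reuse of partial marginalizations, rather than any clever linear algebra, that delivers the scalability claimed for SEA BLUE.
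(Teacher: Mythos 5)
Your proof is correct. Part 1 and the memory claim in Part 2 match the paper's proof essentially step for step: the same one-variable-at-a-time marginalization from a parent one level larger, the same binomial sums $I\bigl[(I+1)^k-I^k\bigr]$ for time and $(I+1)^k$ for storage, and the same $O(n)$ memory argument via the accumulators $C^S_v,D^S_v$ together with the Theorem~\ref{thm:FastProj}/Remark~\ref{rem:downpass} treatment of the down pass. The genuine difference is the runtime accounting in Part 2. The paper sums a per-table application of Part 1 over all $2^k$ tables, writing the cost as $\sum_{j=0}^k\binom{k}{j}\sum_{i=1}^{j}\binom{j}{i}I^{k-i+1}=I(2I+1)^k-2^kI^{k+1}$ (the exponent $k-i+1$, rather than $j-i+1$, over-counts the cost of the smaller tables, but the expression remains a valid upper bound); you instead bound each of the $2^{|R|}$ margins of a table $R$ by a direct summation of cost $I^{|R|}$, giving $(2I)^{|R|}$ per table and $(2I+1)^k$ in total. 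Both accountings land on exactly the claimed $O(I^k)$ and $O((2I+1)^k)$ rates, so your route is equally valid and arguably cleaner: it is self-contained and every term in it is transparently an upper bound. It is worth noting that neither count is tight: applying Part 1 per table with the correct exponent $I^{j-i+1}$ yields $I\bigl[(I+2)^k-(I+1)^k\bigr]$, strictly better in rate than $(2I+1)^k$ as $k\to\infty$ for $I\ge 2$; the lemma's stated rate simply reflects the looser counts that both proofs use. Your closing observation is also exactly the heart of the matter: reuse of partial marginalizations is what secures the tight single-table rate in Part 1, while the crude direct bound suffices in Part 2 only because the target there is already $(2I+1)^k$.
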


\begin{remark}
    In Lemma \ref{lem:CollectionCost}, the quantity $\log(2I+1)/\log(I+1)$ is a decreasing function for $I\geq 0$, which approaches 1 as $I\rightarrow\infty$. At $I=2$ (the smallest number of levels for a variable), it evaluates to $\log(5)/\log(3) \approx 1.46497$. The runtime as $k\rightarrow \infty$ is at worst $o(n^{1.5})$ and approaches $O(n)$ for larger $I$. On the other hand, as $I\rightarrow \infty$, we have runtime of $O(n)$ no matter the value of $k$. 
\end{remark}

 { \bf Comparison Between Matrix Projection and SEA BLUE: }
    We see a large improvement in the memory usage of SEA BLUE compared to the matrix projection: $O(n)$ versus $O(np)$, where $p$ is only slightly smaller than $n$. 
    
    In terms of runtime, when $I\rightarrow \infty$, we see a clear improvement to $O(I^k)$ versus $O(I^{2k-1}+I^{2.373(k-1)})$; these runtimes only agree when $k=1$. As $k\rightarrow \infty$, we can compare  
    $(2I+1)^k\leq 2^k(I+1)^k=O(I^k(I+1)^k)$, 
    where the big-$O$ is tight when $I=2$ and is little-$o$ for $I>2$, 
    and $(2I+1)^k=o(I^{2.373k}),$ for $I\geq 2$. Less precisely, we have that SEA BLUE has runtime between $\Omega(n)$ and $o(n^{1.5})$, whereas the matrix projection has a runtime approximately on the order of $O(n^2)$-$O(n^{2.373})$.

 \section{Covariance Calculation and Confidence Intervals}\label{s:errorQ}
 When publishing differentially private data, it is important to communicate the error due to privacy in the published estimates. In this section, we show that for any linear procedure (including SEA BLUE), we can either calculate the standard error exactly or approximate it with a Monte Carlo approach. In either case, we show how to construct confidence intervals. We also present a distribution-free method for Monte Carlo confidence intervals.

We introduce additional notation to simplify the presentation of this section:
for a table $Z^S$, we let $\vect(Z^S)$ denote a vectorization of the entries of $Z^S$, where we assume a fixed ordering of the variables and levels of the variables. 
 \subsection{Exact Covariance Calculation and \texorpdfstring{$z$}{z}-Confidence Intervals}\label{s:confidence}
 Under assumption (A1)-(A3), we know that SEA BLUE produces the same result as the best linear unbiased estimate described in Theorem \ref{thm:general}. In Theorem \ref{thm:general}, the covariance of the final estimates $\hat Y$ can be obtained by applying the projection procedure to a factorization of the original covariance: $\Sigma^{1/2}$. If the noisy counts $\twid Y$ are normally distributed, we can easily construct confidence intervals for the true values $Y$ using the calculated variances. 

 \begin{restatable}{prop}{propcovarianceExact}\label{prop:covarianceExact}
     Suppose that (A1)-(A3) hold. Let $\Sigma_i^{1/2}$ be the $i^{th}$ column of $\Sigma^{1/2}$. Apply SEA BLUE to $\Sigma_i^{1/2}$ (pretending the vector $\Sigma_i^{1/2}$ contains noisy counts), setting all invariants to zero,   and call the result $\hat \Sigma_i^{1/2}$. Let $\hat\Sigma^{1/2}$ be the matrix whose $i^{th}$ column is $\hat\Sigma^{1/2}_i$. Then, $\hat \Sigma=\hat \Sigma^{1/2}(\hat\Sigma^{1/2})^\top$ is the covariance matrix of $\vect(\hat Y)$. 
     If $\Sigma=\sigma^2I$ (a stronger version of (A3) where all counts have equal variance), then $\hat\Sigma$ is obtained column-wise by simply passing each column of $\Sigma$ through SEA BLUE and no matrix multiplication is needed. 
     
     If $\twid Y^S_v$ are all normally distributed, then a $(1-\alpha)$-confidence interval for $\vect(Y)_i$ is $\vect(\hat Y)_i \pm z_{1-\alpha/2} \sqrt{(\hat\Sigma)_{i,i}}$. 
 \end{restatable}


 \subsection{Monte Carlo Confidence Intervals}
 In Proposition \ref{prop:covarianceExact}, we saw how to compute the exact covariance of $\hat Y$ under assumptions (A1)-(A3). However, if either the assumption of uncorrelated noise (A2) or the assumption of equal variance within tables (A3) are violated, then this result no longer holds. Furthermore, if the variance of every count is desired, running SEA BLUE for $n$ times may be overly expensive.  Finally, even if Proposition \ref{prop:covarianceExact} is applicable, if the $\twid Y$ are not normally distributed,  producing tight CIs based on the covariance may be challenging, as the use of suboptimal concentration inequalities such as Chebyshev and Gauss can result in confidence intervals with excess width.

 To address these limitations, we propose two Monte Carlo confidence interval methods, one under the assumption of normality and the other which holds without any distributional assumptions. These methods simply rely on the fact that the SEA BLUE algorithm is a linear procedure (even if (A2) and (A3) do not hold), and the noise distribution of $N$ is known (as mentioned under ``Immunity to Post-Processing'' in Section \ref{s:DP}).

 \begin{restatable}{prop}{propMonteCarloT}\label{prop:MonteCarloT}
     Suppose that (A1) holds and $\twid Y$ are normally distributed. Then, it follows that $\hat Y$ are also normally distributed with mean $Y$. Let $\vect(\twid Y)^{(1)},\ldots\vect(\twid Y)^{(R)}$ be i.i.d. samples with the same distribution as $\vect(N)$. Call $(\hat \sigma^2)^R_i = \frac 1m\sum_{j=1}^R [\vect(\hat Y^{(j)})_i]^2$, where $\hat Y^{(j)}$ is the result of applying SEA BLUE to $\twid Y^{(j)}$. Then, $\vect(\hat Y)_i \pm t_{1-\alpha/2}(R) \sqrt{(\hat\sigma^2)_i^R}$ 
     is a $(1-\alpha)$-confidence interval for $\vect(Y)_i$. 
 \end{restatable}

 If the $\twid Y$'s are not normally distributed, then we propose a distribution-free Monte Carlo confidence interval using order statistics.

 \begin{restatable}{prop}{propMonteCarloDF}\label{prop:MonteCarloDF}
     Suppose that (A1) holds. Let $\vect(\twid Y)^{(1)},\ldots\vect(\twid Y)^{(R)}$ be i.i.d. samples with the same distribution as $\vect(N)$. Call $\hat Y^{(j)}$ the result of applying SEA BLUE to $\twid Y^{(j)}$. Given an index $i$, let $X^i_{(1)},\ldots, X^i_{(R)}$ be the order statistics of  
     $|\vect(\hat Y^{(1)}_i)|,\ldots, |\vect(\hat Y^{(R)}_i)|. $ 
     If $R\geq (1-\alpha)/\alpha$, then
     $\vect(\hat Y)_i \pm X^i_{(\lceil (1-\alpha) (R+1)\rceil)}$ 
     is a $(1-\alpha)$-confidence interval for $\vect(Y)_i$, where $\lceil\cdot\rceil$ is the ceiling function. 
 \end{restatable}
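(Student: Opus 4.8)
The plan is to exploit two structural features of SEA BLUE that hold under (A1) alone: it is a \emph{linear} map of the noisy counts, and it is unbiased. Writing $\vect(\hat Y) = L\,\vect(\twid Y)$ for a fixed matrix $L$ — the collection weights $1/\var(\twid Y^R_{v(R)})$ and the down-pass projections depend only on the known covariance and not on the data — and decomposing $\vect(\twid Y) = \vect(Y) + \vect(N)$ with $\EE\vect(N) = 0$, unbiasedness of $\vect(\hat Y)_i$ forces $(L\,\vect(Y))_i = \vect(Y)_i$. Hence the estimation error satisfies
$$\vect(\hat Y)_i - \vect(Y)_i = (L\,\vect(N))_i,$$
which is exactly the $i$-th coordinate of SEA BLUE run on a pure-noise input. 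Consequently the error $E \defeq \vect(\hat Y)_i - \vect(Y)_i$ has the same distribution as each Monte Carlo statistic $\vect(\hat Y^{(j)})_i$, since the latter is SEA BLUE applied to a draw distributed as $\vect(N)$.

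Next I would note that the Monte Carlo draws are generated from the known noise distribution independently of the observed data, so $|E|, |\vect(\hat Y^{(1)})_i|, \dots, |\vect(\hat Y^{(m)})_i|$ are $m+1$ i.i.d.\ (hence exchangeable) nonnegative random variables. The interval $\vect(\hat Y)_i \pm X^i_{(k)}$ with $k = \lceil (1-\alpha)(m+1)\rceil$ covers $\vect(Y)_i$ precisely when $|E| \le X^i_{(k)}$, where $X^i_{(k)}$ is the $k$-th order statistic of $|\vect(\hat Y^{(1)})_i|, \dots, |\vect(\hat Y^{(m)})_i|$. So it suffices to prove $P\!\left(|E| \le X^i_{(k)}\right) \ge 1-\alpha$.

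I would then invoke the standard rank/exchangeability argument. The inequality $|E| \le X^i_{(k)}$ holds whenever at most $k-1$ of the $m$ samples fall strictly below $|E|$, i.e.\ whenever $|E|$ is among the $k$ smallest of the $m+1$ exchangeable values; by exchangeability this event has probability at least $k/(m+1)$, with equality when ties occur with probability zero. Since $k = \lceil (1-\alpha)(m+1)\rceil \ge (1-\alpha)(m+1)$, this gives coverage at least $1-\alpha$. I also need $X^i_{(k)}$ to be well-defined, i.e.\ $k \le m$: the hypothesis $m \ge (1-\alpha)/\alpha$ is equivalent to $\alpha(m+1) \ge 1$, hence to $(1-\alpha)(m+1) \le m$, which yields $\lceil(1-\alpha)(m+1)\rceil \le m$ because $m$ is an integer.

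The main obstacle is twofold. The substantive step is justifying the distributional reduction ``$E \stackrel{d}{=}$ SEA BLUE on pure noise,'' which rests on SEA BLUE being genuinely linear and unbiased under (A1) only; I would verify that each constituent operation — the inverse-variance collection weights, the constructions $\hat Z^S$ and $\acute Z^S$, and the projections in the down pass — is a data-independent linear map, so that $L$ does not depend on $\vect(\twid Y)$. The delicate technical point is the rank argument when the noise is discrete (e.g.\ discrete Gaussian), where ties among the $|\vect(\hat Y^{(j)})_i|$ have positive probability; here I would argue that ties only enlarge the event $\{|E| \le X^i_{(k)}\}$ (since values equal to $|E|$ are not counted as strictly below it), so the bound $P(|E| \le X^i_{(k)}) \ge k/(m+1)$ survives in the discrete case and preserves the $\ge 1-\alpha$ guarantee.
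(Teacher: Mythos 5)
Your proof is correct and follows essentially the same route as the paper's: reduce the error to SEA BLUE applied to pure noise via linearity and unbiasedness, note that $|\vect(\hat Y)_i - \vect(Y)_i|$ and the Monte Carlo statistics $|\vect(\hat Y^{(j)})_i|$ are i.i.d., and apply the exchangeability/rank bound $P(|E|\le X^i_{(k)})\ge k/(m+1)\ge 1-\alpha$ with $k=\lceil(1-\alpha)(m+1)\rceil$. Your write-up additionally makes explicit what the paper leaves implicit --- the data-independence of the linear map $L$, the check that $m\ge(1-\alpha)/\alpha$ guarantees $k\le m$ so the order statistic exists, and the fact that ties only enlarge the coverage event --- all of which are sound.
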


In Propositions \ref{prop:MonteCarloT} and \ref{prop:MonteCarloDF}, the i.i.d. samples do not use any additional privacy loss budget, as they are simply realizations from the noise distribution and do not access the sensitive data. 
 
 Note that in Proposition \ref{prop:MonteCarloDF}, the Monte Carlo samples are used most efficiently when $(1-\alpha)(R+1)$ is an integer. For example, if $\alpha=.05$, then $R=19$ is the smallest value that can be used in Proposition \ref{prop:MonteCarloDF}, and $R=99$ or $R=199$ are other sensible choices to reduce the expected width of the confidence interval.

In Table \ref{tab:relativeWidth}, we compare the average relative width of the confidence intervals from Propositions \ref{prop:covarianceExact}, \ref{prop:MonteCarloT}, and \ref{prop:MonteCarloDF} in the case of normally distributed data. Compared to using the exact covariance, Proposition \ref{prop:MonteCarloDF} results in about a 10\% increase in expected width when using the smallest value of $R=19$, and this error can be reduced to $<1\%$ by using $R=199$. On the other hand, with Proposition \ref{prop:MonteCarloT}, we can obtain $<1\%$ error with only $R=99$. Finally, when comparing the two Monte Carlo methods to each other, the distribution-free approach has $4\%$ increased width compared to Proposition \ref{prop:MonteCarloT} with $R=19$, and less than $1\%$ increased width at $R=99$. We see that these Monte Carlo methods offer a competitive alternative to Proposition \ref{prop:covarianceExact}.

\begin{table}[!ht]
    \centering
    \begin{tabular}{r|lrrr}
     \multicolumn{2}{r}{$R:$}&19&99&199\\\hline
      Monte Carlo T / z  & &1.055& 1.010 &1.005\\
       Monte Carlo DF / z&  &1.094 &1.017&1.009\\
       Monte Carlo DF / Monte Carlo T &&1.038 & 1.007&1.005
    \end{tabular}
    \caption{Mean relative width of the confidence intervals from Proposition \ref{prop:covarianceExact} (z), Proposition \ref{prop:MonteCarloT} (Monte Carlo T), and Proposition \ref{prop:MonteCarloDF} (Monte Carlo DF). Simulation was conducted with normally distributed random variables, $\alpha=.05$, and 100,000 replicates. In all settings, the Monte Carlo standard errors are $<.001$.}
    \label{tab:relativeWidth}
\end{table}

\begin{remark}
    Suppose we want a confidence interval for multiple counts in $Y$. With either Proposition \ref{prop:MonteCarloT} or Proposition \ref{prop:MonteCarloDF}, the same $R$ copies of $\hat Y^{(j)}$ can be re-used to make a confidence interval for each count of interest. On the other hand, using Proposition \ref{prop:covarianceExact} requires running SEA BLUE for each count of interest that has a different covariance column. So, while the Monte Carlo methods have an overhead of $R$ runs, the number of runs need not scale with the number of desired confidence intervals. 
\end{remark}

\begin{remark}[Refining Confidence Intervals]\label{rem:clip}
While SEA BLUE does not use assumptions of non-negativity and integrality,  we can use these constraints to refine any of the confidence intervals presented earlier in this section without affecting the coverage. For example, if it is known that the true values are non-negative integers, then if the initial confidence interval is $[-1.5,12.7]$ a refined confidence interval would be $[0,12]$. In general, if $[a(\twid Y),b(\twid Y)]$ is a $(1-\alpha)$-confidence interval for a count $Y^S_v$, then so is $[a^*(\twid Y),b^*(\twid Y)]$, where $a^*(\twid Y) = \max\{0,\lceil a(\twid Y)\rceil\}$ and $b^*(\twid Y) = \lfloor b(\twid Y)\rfloor$. 
This method of refining always preserves the coverage and can only reduce the width of the intervals. 
\end{remark}


 \section{Simulations and Application to Census Products}\label{s:simulations}
 In this section, we empirically explore the properties of SEA BLUE in both simulations as well as through an application to two Census products. The code for these experiments is available at \url{https://github.com/JordanAwan/SeaBlue}.

 \subsection{Description of Census Products and Simulation Setup}\label{s:products}
The two U.S. Census Bureau products that we consider are the noisy measurement files for Redistricting Data (PL94) and Demographic and Housing Characteristics (DHC). 

PL94  provides tables summarizing population totals as well as counts for race, Hispanic origin, and voting age for each geographic areas as well as housing unit counts by occupancy status and population counts in group quarters. Structurally, the noised PL94 data consists of four variables, sized 2, 2, 8, and 63. The data for each PL94 geography includes contingency tables for 1) the values for each individual variable by itself, 2) values for the full 4-variable cross combination of all variables, 3) three 2-way crosses (both size-2 variables crossed together, and the size-8 variable crossed against each of the size-2 variables), and 4) a single 3-way cross (size-2, 2, and 8 variables crossed together).

The DHC product provides tables on the following variables: age, sex, race, Hispanic or Latino origin, household type, family type, relationship to householder, group quarters population, housing occupancy, and housing tenure. The contingency tables of the DHC that we used as input included 1) a full-cross of five variables, of sizes 2, 2, 42, 63, and 116, 2) a 2-way cross of the size-2 variables, and 3) a 2-way marginal variable by itself. The DHC data set included several contingency tables that are a cross of variables against ``bins'' of another variable, which weren't directly usable under our current approach. Instead of crossing sex against each possible age value, the data crosses sex against a modified age variable that expresses the counts within large age ranges, e.g., 0-18, 19-36, etc. Our SEA BLUE approach for DHC summed the counts for these sex-cross-binned-age, eliminate these unusable ``binned'' counts, in order to get the sex variable by itself.

 We use the 2010 demonstration products for PL94 and DHC, rather than the official 2020 releases. These demonstration products were protected using DP in a similar manner as the official 2020 releases and were produced to help data users understand the DP mechanism proposed for the 2020 products. We use these demonstration products in our simulations because  we are able to use the Census 2010 Summary File 1 (SF1) counts as the approximately true values for the same geographies. SF1 is the official 2010 tabulation of Census counts and these SF1 values are used to assess mean squared error (MSE) and interval coverage, whereas no analogous product was available for the 2020 release. We also used the 2010 DHC demonstration product to measure timing and memory use for a large contingency table, but we unfortunately did not have the corresponding values to measure the accuracy of DHC results as the 2010 SF1 tables did not match the set of noisy tables in the DHC demonstration product.

 We used a sample    of 40 Census blocks from Rhode Island for PL94, and for DHC, we used state-level counts from Rhode Island. For PL94, our replications were simulated by taking a sample of geographies, since each geography only has one set of counts. Note that within each product, all geographies have the same set of tables, so timing and memory should be similar for different geographies (e.g., SEA BLUE applied to PL94 for a block in Rhode Island should see similar performance to SEA BLUE applied to a county in California). We also ran SEA BLUE on West Virginia state and its 55 counties for PL94, treating county as another variable. We initially chose Rhode Island because of its small size, however, in the end the runtime for SEA BLUE and matrix projection do not depend on the size. We chose West Virginia for multiple-county execution because it was the state with the largest number of counties with uniform variance in the noisy PL94 product. 

 For PL94, all tables at a single geography satisfy assumptions (A1)-(A3). However, DHC does not directly fit within our framework as the Age variable is binned at different thresholds in different tables. Due to this, we marginalized out the Age variable whenever binning occurred and used these tables as the input to SEA BLUE and matrix projection. After this pre-processing, assumptions (A1)-(A3) hold for DHC as well.  
 For the West Virginia counties application, each county had the same privacy loss budget, so (A1)-(A3) held for this example as well, including County as an additional variable; however in other states, counties received different privacy loss budgets, which  results in unequal variances.

 Simulations were conducted in Python, where a custom script was written to implement SEA BLUE, and \texttt{numpy} functions were used to perform the matrix operations needed to implement the matrix projection. Tests were run on a Windows laptop with an Intel i7-7820HQ CPU (4 cores / 2.90GHz) and 8 GB RAM. No parallelization was implemented. 

 In our controlled settings, we generate data for a ``$k\times k$'' table for $k \in {3, 4, 5, 6}$, meaning that there are $k$ variables which each have $k$ levels, and we observe all possible noisy margins. For a single geography we generate the detailed table using a zero inflated Poisson distribution, and aggregate back to the total count.  Noise is added using a normal distribution to each count type independently. 
 \begin{table}[t]
\centering
    \begin{tabular}{rrrrrr}
        \hline
        \multicolumn{2}{c}{Size} & \multicolumn{2}{|c|}{Time} & \multicolumn{2}{c}{Memory}\\
        \hline
        Dimension & Counts & SEA BLUE & Projection & SEA BLUE & Projection \\ 
        \hline
        $3\times3$ &  64 & 0.006 & 0.01 & 2.52 & 6.41 \\ 
        $4\times4$ & 625 & 0.05 & 0.024 & 2.53 & 12.92 \\
        $5\times5$ & 7,776 & 0.90 & 60.45 & 4.15 & 947.54 \\ 
        $6\times6$ & 117,649 & 18.70 & N/A & 30.98 & N/A \\
        PL94 Block & 5,184 & 0.49 & 70.55 & 4.30 & 421.77 \\ 
        PL94 Multi & 290,304 & 34.36 & N/A & 116.16 &  N/A \\
        DHC State & 2,897,856 & 299.07 & N/A & 1186.35 & N/A \\
        \hline
    \end{tabular}
    \caption{Comparison of time (in seconds) and memory (in MiB) performance for SEA BLUE versus full projection on tables of various sizes. Time results are averaged over 40 replicates, and memory is averaged over 10 replicates. PL94 block numbers are based on counts from 40 Rhode Island Census blocks, PL94 Multi numbers are based on counts for West Virginia state and its 55 counties. DHC state is based on Rhode Island state counts. Projection did not run for $6\times 6$, DHC, or multi-geography PL94 due to memory limitations. Only a single instance of SEA BLUE was run on DHC as a test of feasibility.}
    \label{tab:timeMem}
\end{table}
 \subsection{Runtime and Memory Usage}
We implement both SEA BLUE and the matrix projection and apply them to problems of varying sizes. At each size, we measure the runtime as well as the memory usage as either can potentially be a limiting factor. The results are found in Table \ref{tab:timeMem}.

 The second main column of Table \ref{tab:timeMem} contains the runtime in seconds, averaged over 40 replicates. For the $3\times 3$ and $4\times 4$ cases, SEA BLUE and projection  both run in less than a second. However, for the $5\times 5$ problem, SEA BLUE still runs in under 1s, whereas projection takes 60s. Finally, SEA BLUE takes about 19s to process the $6\times 6$ table, whereas projection was not able to be run for this table size due to memory limitations.

The final column of Table \ref{tab:timeMem} records the maximum memory required for the method, measured in mebibytes (MiB), averaged over 10 replicates  (each memory result never varied beyond 3 KiB). We see a substantially improved scaling for memory with SEA BLUE, requiring approximately 4MiB for the $5\times 5$ size compared to nearly 1GiB for the projection. Finally, the $6\times 6$ only required 31MiB for SEA BLUE, whereas the projection was unable to run on our machine due to memory limitations. An error message said that it needed to allocate approximately 103 gibibytes (GiB) for the projection matrix. 

We also applied SEA BLUE and the matrix projection to the Decennial Census 2010 demonstration product of PL94, which is protected with  zero-concentrated differential privacy.  Among the simulations, the PL94 contingency table count is most similar in size to the $5\times 5$ case. Performance-wise, we see SEA BLUE has similar performance on PL94 as in the $5\times 5$ case:  the runtime of SEA BLUE for PL94 is about a $95\%$ decrease compared to the matrix projection and the memory is a $99\%$ decrease compared to the projection. When processing many geographies, this difference in performance becomes practically significant. When running SEA BLUE for PL94 for West Virginia state and its 55 counties simultaneously, the projection failed to run while attempting to allocate 624 GiB of memory.

For the larger DHC product, SEA BLUE's computation took less than 1.2 GiB of memory and less than 30min, while projection failed after attempting to allocate 10TiB of memory. We only ran a single instance of SEA BLUE on DHC as a test of feasibility.

\subsection{Unequal Variances}\label{s:unequal}

\begin{table}[t]
\centering
\begin{tabular}{lrrrr}
  \hline
  Experiment & Counts & SEA & Proj & SEA To Proj \\ 
  \hline 
  One Marginal & 4 & 0.8132 & 0.8131 & 0.0000 \\ 
  All Marginals & 16 & 0.8191 & 0.8191 & 0.0002 \\ 
  All 2 Way & 96 & 0.7995 & 0.7880 & 0.0125 \\ 
  All 3 Way & 256 & 0.7983 & 0.7606 & 0.0408 \\ 
  Detailed & 256 & 0.8159 & 0.8069 & 0.0109 \\ 
  All Counts 1 Var & 500 & 0.7932 & 0.7456 & 0.0451 \\
   \hline
\end{tabular}
\label{table:assume}
\caption{Performance versus true projection in various $4\times 4$ scenarios where (A3) is violated, aggregated over 100 replicates.  Three MSE levels are shown: the SEA BLUE estimates to the true counts, projection values to the true counts, and the SEA BLUE estimates to the projection.}
\end{table}
Recall that SEA BLUE is the best linear unbiased estimate only if the assumptions (A1)-(A3) are met. In the PL94 demonstration product, (A1)-(A3) hold for any single geography. However, it may be the case that different counties in the same state have different variances. In this section, we explore through simulations how much error is incurred by using SEA BLUE when the assumption of equal variance within tables (A3) is violated. 

 We conducted a variety of experiments using a $4\times 4$ simulated contingency tables where a varying number of tables violated (A3). The default variance for each count is set to 2, while in the listed tables the variance was set to be either 1, 2, or 3, uniformly at random.  Experiments were conducted where one marginal table, all marginal tables, all 2-way tables, all 3-way tables, the detailed query, or all counts containing a single variable each received non-equal variance. In each case, the MSE was calculated using 100 replicates (because of the small size of the $4\times 4$ tables, we could afford additional replicates here).  
 
The higher proportion of counts that fall within non-uniform tables, the greater the error that SEA BLUE incurs compared to the ideal projection. Nevertheless, these mild violations of (A3) still result in near-optimal MSE (off by about 5\% in the worst case).  For reference, the MSE of the original counts is approximately 2 in all settings.  


\subsection{Confidence Interval Validation}

\begin{table}[t]
    \centering
    \begin{tabular}{r|r|r|r}
        Scenario & Exact & Monte Carlo T& Monte Carlo DF\\
        \hline
        $3\times3$ & 0.9531 & 0.9480 & 0.9582 \\
        $4\times4$ & 0.9520 & 0.9516 & 0.9541 \\
        $5\times5$ & 0.9510 & 0.9500 & 0.9520 \\
        $6\times6$ & 0.9503 & 0.9498 & 0.9523 \\
        PL94 Block & 0.9505 & 0.9510 & 0.9543 \\
        PL94 Multi & 0.9482 & 0.9490 & 0.9733
    \end{tabular}
    \caption{Coverage shown for various scenarios based on 40 replicates using 20 rounds of estimation per trial. $N\times N$ replicates are computed on replicate-generated random noisy data. PL94 runs are based on 40 block-level records from the state of Rhode Island. $\alpha = 0.95$. The Exact method is from Proposition \ref{prop:covarianceExact}, Monte Carlo T is from Proposition \ref{prop:MonteCarloT}, and Monte Carlo DF is from Proposition \ref{prop:MonteCarloDF}.}
    \label{tab:coverage}
\end{table}

In this section, we numerically validate the confidence interval methods of Section \ref{s:errorQ}. In particular, we estimate the coverage and width on both simulated data and the PL94 Census data product. Confidence interval testing against DHC was not possible because reference values were not available for the DHC tables.

In Table \ref{tab:coverage}, we measure the empirical coverage of our three confidence interval methods on both simulated data and on the PL94 product averaged over 40 block-level records from Rhode Island. Note that in the simulated tables, the noise distribution was continuous Gaussian whereas for PL94 the noise distribution was discrete Gaussian. Even with the slight deviation from normality, we still approach the 95\% nominal coverage for all of our proposed methods. The over-coverage of the distribution-free approach is to be expected, since this approach makes fewer assumptions on the noise distribution.
\begin{table}[t]
\centering
    \begin{tabular}{r|rr|rr|rr|rr}
        \hline
        \multicolumn{1}{c|}{Scenario} & \multicolumn{2}{c|}{Initial} & \multicolumn{2}{c}{Exact} & \multicolumn{2}{|c|}{T Monte Carlo} & \multicolumn{2}{c}{DF Monte Carlo} \\
        \hline
         & Raw & Clip & Raw & Clip & Raw & Clip & Raw & Clip \\ 
        \hline 
        $3\times3$ & 5.544 & 3.413 & 3.601 & 2.240 & 3.772 & 2.361 & 3.963 & 2.546 \\ 
        $4\times4$ & 5.544 & 3.506 & 3.548 & 2.254 & 3.728 & 2.407 & 3.915 & 2.574  \\
        $5\times5$ & 5.544 & 3.549 & 3.514 & 2.242 & 3.694 & 2.400 & 3.884 & 2.568 \\
        $6\times6$ & 5.544 & 3.568 & 3.491 & 2.233 & 3.669 & 2.392 & 3.859 & 2.5605 \\
        PL94 Block & 43.802 & 21.765 & 19.342 & 9.319 & 20.300 & 9.831 & 21.349 & 10.358 \\
        PL94 Multi & 31.230 & 16.062 & 23.740 & 11.997 & 23.772 & 12.013 & 27.257 & 13.832 \\
        DHC State & 13.044 & 6.421 & 12.887 & 6.121 & 13.544 & 6.371 & 14.241 & 6.727  \\
        \hline
    \end{tabular}
    \caption{Comparison of confidence interval widths for tables that were supplied as input. 
    Intervals included for initial noise, SEA BLUE variance calculation, T-based, and distribution-free Monte Carlo estimations. Results averaged from 40 replicates using 20 rounds of Monte Carlo estimation per replicate. $N\times N$ replicates are computed on replicate-generated random noisy data. PL94 runs are based on 40 block-level records from the state of Rhode Island. DHC numbers are computed from Rhode Island state-level counts.}
    \label{tab:widthObs}
\end{table}

In Table \ref{tab:widthObs}, we compare the width of our confidence interval procedures. In each case, we consider both the raw and clipped intervals, where the clipping is according to Remark \ref{rem:clip}. For the simulated examples, the input and output tables are identical; for PL94 and DHC, the output is all tables possible from the marginals, but the input is the subset of tables actually supplied in the Census data. In all cases, our clipped intervals are significantly smaller than the initial clipped intervals. In the simulated tables and for PL94, we also see the unclipped confidence interval width is significantly reduced. Interestingly, for DHC the unclipped width is not significantly reduced, but the clipped width is much smaller after SEA BLUE. Clipping has the greatest impact when there are many true counts near zero; we suspect that it must be the case that in DHC, there are many such counts, which are disproportionately affected by using the clipping. 

Similarly, Table \ref{tab:widthNobs} compares the widths of our confidence intervals but for counts that do not have original noisy estimates. Note that this only applies to PL94 and DHC since the simulated tables have noisy counts for all margins. Compared to Table \ref{tab:widthObs}, these widths are larger, which makes sense since no direct measurement was available for these counts. Otherwise, we see similar behavior in Table \ref{tab:widthNobs}: the clipped intervals are significantly shorter, and the exact variance method is the smallest, followed by the Monte Carlo T-interval, with the distribution-free Monte Carlo interval being the largest.

\begin{table}[t]
\centering
    \begin{tabular}{r|rr|rr|rr}
        \hline
        \multicolumn{1}{c|}{Scenario} & \multicolumn{2}{c}{Exact} & \multicolumn{2}{|c|}{T Monte Carlo} & \multicolumn{2}{c}{DF Monte Carlo} \\
        \hline
         & Raw & Clip & Raw & Clip & Raw & Clip \\ 
        \hline
        PL94 Block & 20.9643 & 10.0505 & 22.0136 & 10.5767 & 23.1337 & 11.1392 \\
        PL94 multi & 25.0119 & 12.8369 & 25.04334 & 12.8535 & 28.7145 & 14.7892 \\
        DHC State & 26.5573 & 13.6419 & 27.8967 & 14.3599 & 29.3217 & 15.1036 \\
        \hline
    \end{tabular}
    \caption{Comparison of confidence interval widths for tables that were \emph{not} supplied as input. 
    Intervals included for initial noise, SEA BLUE variance calculation, T-based, and distribution-free Monte Carlo intervals size averaged from 40 replicates using 20 rounds of Monte Carlo estimation per replicate. PL94 runs are based on 40 block-level records from  Rhode Island. DHC numbers are computed from Rhode Island state-level counts.}
    \label{tab:widthNobs}
\end{table}

 \section{Discussion and Future Work}
 Under assumptions (A1)-(A3), our SEA BLUE method offers a scalable implementation of the best linear unbiased estimate of true counts, subject to self-consistency constraints. As it is a linear procedure, we can easily produce confidence intervals with provable coverage guarantees. Compared to a naive matrix projection, our method scales significantly better in terms of both runtime and memory requirements and we show that it can be applied to large Census products. 

The proposed method was initially designed to process a subset of the Decennial Census products at a single geography (e.g., a single county or a single block), but we were able to also apply our method to incorporate two levels of geography, such as a state with its counties. However, including more than two levels of geography would require an extension of our framework since, for example, different states would have a different set of counties so the ``county'' variable changes depending on the state. We suspect that a modification of our collection step and down pass approach can accommodate adding in multiple geographies, but we leave this as future work. 

 While Section \ref{s:unequal} showed that SEA BLUE is robust to mild violations to (A3), in general it could be arbitrarily far from the projection estimates when there are more egregious violations of the (A3). Performance could be improved by modifying Equations  \eqref{eq:Zhat} and \eqref{eq:Zacute} to handle unequal variances. However, even with this modification, SEA BLUE would not in general produce the projection when (A3) does not hold. As mentioned in the discussion before Theorem \ref{thm:total}, when only (A1) and (A2) hold, the collection step produces an optimal estimator using the estimates that are ``from below'' and ``within-table''; thus, this step always improves over the original noisy counts, but could be arbitrarily worse than the optimal estimator (Example \ref{ex:A3forCollection} offers an example where SEA BLUE produces an estimator with variance 8, where the projection estimator would have variance at most $3$). Future work could investigate whether SEA BLUE can be modified to offer more robustness to violations of (A3).

 While (A2) is more commonly used in similar literature, and is also satisfied by the 2020 census products, it is also worth considering extensions of SEA BLUE that can handle violations of (A2). As noted in Section \ref{s:collection}, it seems that without (A2), it may be necessary to track the covariances of the estimates, which would significantly increase the computation and memory requirements. Future work may investigate whether weaker versions of (A2) or modifications to the SEA BLUE procedure can maintain computational efficiency while allowing for dependence in the noise distribution.

 As mentioned in Remark \ref{rem:invariant}, this paper does not directly address \emph{invariants}, which are counts observed without noise. We believe that the procedure can be modified to incorporate other linear equality invariants such as these as well.



  SEA BLUE only accommodates linear equality constraints in order to remain a linear and unbiased procedure. This has two consequences: 1) SEA BLUE can result in negative and non-integer-valued estimates and 2) SEA BLUE is expected to have higher MSE than the Top Down Algorithm, which incorporates non-negativity and integrality (see \citet{mccartan2023making} for a comparison of the Top Down Algorithm against the collection step). As noted in Remark \ref{rem:clip}, our confidence intervals can be refined to incorporate additional constraints without affecting coverage. Nevertheless, accuracy could be improved by enforcing additional constraints in the estimation procedure. An important direction for future work would be to develop computationally efficient post-processing methods which incorporate all constraints, preserve (at least approximate) unbiasedness, and also give valid confidence regions.

  The results of Section \ref{s:confidence} are focused on coordinate-wise confidence intervals, but one may also be interested in confidence regions that simultaneously cover multiple counts. Extension of Propositions \ref{prop:covarianceExact} and \ref{prop:MonteCarloT} to give confidence ellipsoids should be straightforward by leveraging the normality assumption. An extension of Proposition \ref{prop:MonteCarloDF} is less straightforward, but could potentially be done using the order statistics of a depth statistic (e.g., \citealp{yeh1997balanced,xie2022repro,awan2024simulation}). However, it is not clear whether these approaches are computationally tractable, especially when making a confidence set for all counts, simultaneously. We leave it to future researchers to investigate the details and feasibility of such confidence regions. 

Our simulations  did not implement parallelization techniques in the implementation of SEA BLUE. For large databases, the ability to parallelize could enable even larger tables to be analyzed by SEA BLUE. It would be worth developing R/Python packages that implement the SEA BLUE method with parallel optimizations incorporated.

\acks{This work was supported by Department of Commerce Contract 1331L523D130S0003 Task Order 1333LB23F00000196. Jordan Awan's research was also supported in part by NSF grant no. SES-2150615 to Purdue University. The authors are grateful to Nathaniel Cady who contributed early numerical work to the project, and to Ryan Janicki, Philip Leclerc, Mikaela Meyer, as well as the anonymous reviewers for their feedback which improved the presentation of this manuscript.} 



\appendix

\section{ Summary of Notation}\label{s:notation}
 We include in Table \ref{tab:notation} a summary of the notation used in this paper, which is defined in Section \ref{s:tableNotation}.

\begin{table}[ht!]
    \centering
    \begin{tabular}{c|c|c}
        Notation & Context & Definition/Interpretation \\ \hline
         $Y$& & true counts\\
         $\tilde Y$ & & observed noisy counts\\
         $\acute{Y}$ & & intermediate estimate of $Y$ \\
         $\hat Y$ && final estimate of $Y$ from SEA BLUE\\
         $\mscr A$&& set of variables in the contingency tables\\
         $\mscr O$& $\mscr O\subset 2^{\mscr A}$& set of observed tables\\
         $\mscr D$&$\mscr D\subset 2^{\mscr A}$& set of desired tables\\
         $|A|$&$A\in \mscr A$& number of levels that variable $A$ can take on\\
         $\tilde Y^S$& $S\subset \mscr A$& a table of noisy counts for variables in $S$\\
         $V^S$ & $S\subset \mscr A$& the set of indices for $Y^S$\\
         $\twid Y^S_v$&$v\in V^S,S\subset \mscr A$& the single noisy count in $\twid Y^S$ indexed by $v$\\
         $V^S_\bullet$&$S\subset \mscr A$& same as $V^S$, but each coordinate also can be``$\bullet$''\\
         $\twid Y^S_v$& $v\in V^S_\bullet,S\subset \mscr A$& index with $v$, summing out variables with ``$\bullet$''\\
         $v(\ol S)$&$\ol S\supset S$, $v\in V^S,S\subset \mscr A$& $v(\ol S)\in V^{\ol S}_{\bullet}$ agrees with $v$ on $S$ and has ``$\bullet$'' on $\ol S\setminus S$\\
         $v[\ul S]$&$\ul S\subset S, v\in V^S,S\subset \mscr A$& $v[\ul S]\in V^{\ul S}$ extracts entries of $v$ corresponding to $\ul S$\\
         $v[\ul S](S)$&$\ul S\subset S, v\in V^S,S\subset \mscr A$& $v[\ul S](S)\in V^S_\bullet$ agrees with $v$ on $\ul S$ and has ``$\bullet$'' on $S\setminus \ul S$.
    \end{tabular}
    \caption{ Summary of notation from Section \ref{s:tableNotation}.}
    \label{tab:notation}
\end{table}
\section{Proofs and Technical Details}\label{s:proofs}

We review some properties of linear transformations of random variables: let $A\in \RR^{n\times k}$ and $b\in \RR^n$ and let $Y\in \RR^k$ be a random variable with mean $\mu$ and covariance $\Sigma$. Then $\EE AY=A\mu$ and $\mathrm{Cov}(AY)=A\Sigma A^\top$. In particular, if  $Y\sim N(\mu, \Sigma)$. Then $AY+b\sim N(A\mu+b, A\Sigma A^\top)$. 

\thmgeneral*
\begin{proof}
Since $A \mu=b$, we know that $b\in \mathrm{columnSpace}(A)$. So, there exists $v$ such that $b=A v$. Now, we can say that $\mu-v\in \mathrm{nullSpace}(A)$. We then transform our problem as follows. Let $X'=\Sigma^{-1/2}(X-v)$, $\mu'=\Sigma^{-1/2}(\mu-v)$. Note that $A\Sigma^{1/2}\mu'=0$, so that $\mu'\in\mathrm{nullSpace}(A\Sigma^{1/2})$. Let $M$ be a matrix whose columns form a basis for $\mathrm{nullSpace}(A\Sigma^{1/2})$. Then there exists a vector $\beta$ such that $\mu'=M\beta$. So, we have rephrased our problem as $X'=M\beta+e$, where $e$ is a mean-zero vector, with covariance $I$. By the Gauss-Markov Theorem, the BLUE for $\beta$ is $\hat\beta=(M^\top M)^{-1}M^\top X'$, which implies that the BLUE for $\mu'$ is $\mathrm{Proj}_{\mathrm{nullSpace}(A\Sigma^{1/2})}^I X'$. Converting back to our original problem, we have that the BLUE for $\mu$ is $\hat \mu=\Sigma^{1/2} \mathrm{Proj}_{\mathrm{nullSpace}(A\Sigma^{1/2})}^I(\Sigma^{-1/2}(X-v))+v$, which can be simplified to $\hat\mu = \mathrm{Proj}_{S_0}^{\Sigma^{-1}}(X)+(\mathrm{Proj}_{S_0^\perp}^\Sigma)^\top v$. Similarly, if $X\sim \mscr N(\mu,\Sigma)$, then $\hat\beta$ is the uniformly minimum variance unbiased estimator for $\beta$, which implies that $\hat\mu$ is the uniformly minimum variance unbiased estimator for $\mu$. 
\end{proof}

\lemcollectionVar*
\begin{proof}
To simplify notation, we write $\sum_{\ol S\supset S}$ in place of $\sum_{\ol S\in \mscr O | \ol S\supset S}$. Recall that $\acute Y^S_v$ can be expressed as
    \[\acute Y^S_v = \frac{\sum_{\ol S\supset S} \left(\var(\twid Y^{\ol S}_{v(\ol S)})\right)^{-1} \twid Y^{\ol S}_{v(\ol S)}}{\sum_{\ol S\supset S} \left(\var(\twid Y^{\ol S}_{v(\ol S)})\right)^{-1}}.\]
    Using the fact that all of the $\twid Y$ values are independent, we calculate the variance as
    \begin{align*}
        \var\left(\acute Y^S_v\right) &=\frac{\sum_{\ol S\supset S} \left(\var(\twid Y^{\ol S}_{v(\ol S)})\right)^{-2} \var\left(\twid Y^{\ol S}_{v(\ol S)}\right)}{\left(\sum_{\ol S\supset S} \left(\var(\twid Y^{\ol S}_{v(\ol S)})\right)^{-1}\right)^{2}}\\
        &=\frac{\sum_{\ol S\supset S} \left(\var(\twid Y^{\ol S}_{v(\ol S)})\right)^{-1}}{\left(\sum_{\ol S\supset S} \left(\var(\twid Y^{\ol S}_{v(\ol S)})\right)^{-1}\right)^{2}}\\
        &=\left(\sum\limits_{\ol S\supset S} \left(\var(\twid Y^{\ol S}_{v(\ol S)})\right)^{-1}\right)^{-1}.
    \end{align*}
\end{proof}

 \thmtotal*
 \begin{proof}
     Let $\mscr T$ be the affine space consisting of all possible linear unbiased estimates of $Y^\emptyset$, based on the counts in $\twid Y^S$ for $S\in \mscr O$. Note that $\{\twid Y^S_{\bullet}\mid S\in \mscr O\}$ is a set of linearly independent members of $\mscr T$, where the subscript ``$\bullet$'' is short hand to represent the vector in $V^S_\bullet$ with ``$\bullet$'' in every entry. Then, there exists a basis vector $b$ for $\mscr T$ containing $\twid Y^S_\bullet$ as the first entries for all $S\in \mscr O$. Let $\Lambda$ be the covariance matrix for $b$. 
     
     To see that $\Lambda$ is positive definite, let $\omega$ be a nonzero vector. Then $\var(\omega^\top b)=\omega^\top \Lambda \omega$. Since $\twid Y^S_v$ are independent with nonzero variance for all $v\in V^S$ and $S\in \mscr O$, and because the entries of $b$ are linearly independent, it follows that there exists a nonzero vector $\twid \omega$ such that $\omega^\top b = \twid \omega^\top \vect(\twid Y)$, where $\vect(\twid Y^S_v)$ represents a vectorization of the entries of $\twid Y^S_v$ for all  $v\in V^S$ and $S\in \mscr O$. Then, we have that 
     \[\var(\twid \omega^\top \vect(\twid Y)) = \sum_{S\in \mscr O}\sum_{v\in V^S} \twid \omega^S_v \var(\twid Y^S_v)> 0,\]
     where $\twid \omega^S_v$ represents the entry of $\twid \omega$ that corresponds with $\twid Y^S_v$. So, we have that $\omega^\top \Lambda \omega=\var(\twid \omega^\top  \vect(\twid Y))> 0$, which implies that $\Lambda$ is positive definite. 

     The best linear unbiased estimator for $Y^\emptyset$ uses the inverse covariance weighting of $b$, which can be expressed in terms of matrix operations as, 
     \[\frac{\ul 1^\top \Lambda^{-1}b}{\ul 1^\top (\Lambda^{-1})\ul 1},\] where $\ul 1$ is the vector of all $1$'s of the appropriate dimension, which is the efficient generalized least squares solution \citep{anderson2008multiple}.

     On the other hand, the collection estimate $\acute Y^\emptyset$ can be expressed as 
     \[\frac{\left( \left(\frac{1}{\var(\twid Y^S_{\bullet})}\right)_{S\in \mscr O},0,\ldots 0\right)b}{\sum_{S\in \mscr O}  \frac{1}{\var(\twid Y^S_\bullet)}},\]
     where $\left(1/\var(\twid Y^S_{\bullet})\right)_{S\in \mscr O}$ represents the vector of inverse-variances in the same order as the elements $\twid Y^S_\bullet$ appear in $b$.

     To claim that the collection step results in the BLUE estimate of $Y^\emptyset$, we need to establish that
     \[\frac{\left( \left(\frac{1}{\var(\twid Y^S_{\bullet})}\right)_{S\in \mscr O},0,\ldots 0\right)}{\sum_{S\in \mscr O}  \frac{1}{\var(\twid Y^S_\bullet)}}
     =\frac{\ul 1^\top \Lambda^{-1}}{\ul 1^\top \Lambda^{-1} \ul 1}.\]
      It suffices to show that 
     \begin{equation}\label{eq:invCov}\left( \left(\frac{1}{\var(\twid Y^S_{\bullet})}\right)_{S\in \mscr O},0,\ldots 0\right)\Lambda=\ul 1^\top,\end{equation}
     as this implies that 
     \[\left( \left(\frac{1}{\var(\twid Y^S_{\bullet})}\right)_{S\in \mscr O},0,\ldots 0\right)=\ul 1^\top\Lambda^{-1},\]
     as well as 
     \[\sum_{S\in \mscr O}  \frac{1}{\var(\twid Y^S_\bullet)}=\left( \left(\frac{1}{\var(\twid Y^S_{\bullet})}\right)_{S\in \mscr O},0,\ldots 0\right)1=\ul 1^\top\Lambda^{-1} 1.\]
     The remainder of the proof aims to establish \eqref{eq:invCov}.

     Let $Z$ be an element of $b$. Then, restricting attention to the column of $\Lambda$ corresponding to $Z$, we need to show that 
     \[\sum_{S\in \mscr O} \frac{\cov(Z,\twid Y^S_\bullet)}{\var(\twid Y^S_\bullet)} =1.\]

     We can express $Z = \sum_{S\in \mscr O} \sum_{v\in V^S} a^S_v \twid Y^S_v$ for some weights $a^S_v$. We claim that for $Z$ to be an unbiased estimator for $Y^\emptyset=\sum_{e\in V^{\mscr A}} Y_e$, it must be that $\sum_{S\in \mscr O}\sum_{\substack{v\in V^S\\v =e[S]}} a^S_v=1$, for all $e\in V^{\mscr A}$ 
     To see this, we calculate
     \begin{align*}
         \EE Z &= \sum_{S\in \mscr O} \sum_{v\in V^S} a_v^S \EE \twid Y_v^S\\
         &=\sum_{S\in \mscr O} \sum_{v\in V^S} a_v^S\sum_{e[S]=v}Y_e\\
         &=\sum_{e\in V^{\mscr A}}Y_e \sum_{S\in \mscr O} \sum_{\substack{v\in V^S\\ v=e[S]}} a_v^S,
     \end{align*}
     where in the second line we used the fact that $\EE \twid Y_v^S = Y^S_v = \sum_{e[S]=v} Y_e$. We see that in order for $\EE Z = Y^\emptyset = \sum_{e\in V^{\mscr A}} Y_e$, it must be that $\sum_{S\in \mscr O}\sum_{\substack{v\in V^S\\v =e[S]}} a^S_v=1$ as claimed.


     Using our expansion of $Z$, we can now calculate 
     \begin{align}
         \sum_{S\in \mscr O} \frac{\cov(Z,\twid Y^S_\bullet)}{\var(\twid Y^S_\bullet)} 
         &=\sum_{S\in \mscr O}\sum_{v\in V^S} \frac{a^S_v \var(\twid Y^S_v)}{|V^S|\var(\twid Y^S_v)}\label{eq:equalvar}\\
         &=\sum_{S\in \mscr O}\sum_{v\in V^S} \frac{a^S_v}{|V^S|}\\
         &=\sum_{S\in \mscr O}\sum_{e\in V^{\mscr A}} \frac{|V^S|}{|V^{\mscr A}|} \sum_{\substack{v\in V^S\\ v= e[S]}} \frac{a_v}{|V^S|}\label{eq:overcount}\\
         &=\frac{1}{|V^{\mscr A}|} \sum_{e\in V^{\mscr A}} \sum_{S\in \mscr O}\sum_{\substack{v\in V^S\\ v= e[S]}} a^S_v\\
         &=\frac{1}{|V^{\mscr A}|} \sum_{e\in V^{\mscr A}} 1\\
         &=1,
     \end{align}
     where in \eqref{eq:equalvar} we use the assumption that the variance is constant within tables (A3); in \eqref{eq:overcount}, the factor $|V^{\mscr A}|/|V^S|$ corrects for over-counting, since there are $|V^{\mscr A}|$ $e$'s and $|V^S|$ $v$'s, and each corresponds to an equal number. We see that for every column, $\Lambda_{\cdot,i}$, we have $((1/\var(\twid Y^S)),0,\ldots, 0) \Lambda_{\cdot,i}=1$ and conclude that $\twid Y^\emptyset$ is the BLUE for $Y^\emptyset$. 
 \end{proof}

\thmOptimal*
 \begin{proof}
    For any $S\in 2^{\mscr A}\setminus \mscr O$, artificially set $\twid Y^S_v=0$ for all $v\in V^S$ and $\sigma^2(\twid Y^S_v)=\infty$. We use the convention that for any non-negative number $a$, $a/\infty= 0$. 

    Let $S\in \mscr D$ be chosen, and assume that for all $T_i\subset S$ such that $|T_i|=|S|-1$, we have the BLUE estimates $\hat Y^{T_i}$ (by Proposition \ref{thm:total}, we can compute $\hat Y^\emptyset$; for induction we can assume that we have access to margins ``above'' $S$). As noted by \citet[Proof of Theorem 3]{hay2010boosting}, when holding $\hat Y^{T_i}$ fixed, the values of $\hat Y^S$ are independent of $\twid Y^R$ whenever $R\not\supset S$, and the BLUE $\hat Y^S$ can be expressed as the solution to the following optimization problem:

    \begin{align}
        &\text{minimize } \sum_{R\supset S} \sum_{v\in V^{R}} (\hat Y^{R}_v-\twid Y^{R}_v)^2/\sigma^2(\twid Y^{R}_v).\label{eq:optim}\\
        &\text{subject to } \hat Y^{\ol{R}}_{v(\ol {R})}=\hat Y^{R}_v,\quad\text{for all $\ol{R}\supset R\supset S$, $|\ol{R}|=|R|+1$, and $v\in V^{R}$},\notag\\
        &\text{and } \hat Y^S_{v(S)}=\hat Y^{T_i}_v,\quad \text{for all $T_i$ and $v\in V^{T_i}$}.\notag
    \end{align}
    In the case that $S=\mscr A$ (full detail), this simplifies to 
        \begin{align*}
        &\text{minimize } \sum_{v\in V^{\mscr A}} (\hat Y^{\mscr A}_v-\acute Y^{\mscr A}_v)^2/\sigma^2(\twid Y^{\mscr A}_v)\\
        &\text{subject to } \hat Y^{\mscr A}_{v(\mscr A)}=\hat Y^{T_i}_v,\quad \text{for all $T_i$ and $v\in V^{T_i}$},
    \end{align*}
    where we used the substitution $\acute Y^{\mscr A}=\twid Y^{\mscr A}$, and we see that in this case, the solution to this minimization problem is the projection prescribed in the down pass step.

    Now assume that $S\neq \mscr A$. We write out the Lagrangian for \eqref{eq:optim}:
    \begin{equation}\label{eq:lagrange}
        \sum_{R\supset S}\sum_{v\in V^{R}} (\hat Y^{R}_v-\twid Y^{R}_v)^2/\sigma^2(\twid Y^{R}_v) +\sum_{\substack{\ol{R}\supset R\supset S\\|\ol{R}|=|R|+1}} \sum_{v\in V^{R}} \lambda^{R,\ol{R}}_v(\hat Y^{\ol{R}}_{v(\ol{R})}-\hat Y^{R}_v)+\sum_{T_i}\sum_{v\in V^{T_i}}\lambda^{T_i,S}_v(\hat Y^S_{v(s)}-\hat Y^{T_i}_v).
    \end{equation}
    Note that this is a convex and smooth objective. To understand the solution, we examine the partial derivatives. Taking the derivative with respect to any $\lambda$ just recovers the original equality constraints. For the other derivatives, we separate them into three cases.

    \noindent Case 1: Let $v\in V^S$ and take the derivative of \eqref{eq:lagrange} with respect to $\hat Y^S_v$ and set equal to zero: 
    \begin{equation}\label{eq:1}
        2(\hat Y^S_v-\twid Y^S_v)/\sigma^2(\twid Y^S_v)-\sum_{\substack{\ol S\supset S\\ |\ol S|=|S|+1}} \lambda^{S,\ol S}_v + \sum_{T_i} \lambda^{T_i,S}_{v[T_i]}=0,
    \end{equation}
    where recall that by (A2), $\sigma^2(\twid Y^S_v)$ does not depend on $v$.

    \noindent Case 2: For any $R\supsetneq S$, $R\neq \mscr A$ and for any $v\in V^{R}$, the derivative with respect to $\hat Y^{R}_v$ gives:
    \begin{equation}\label{eq:2}
        2(\hat Y^{R}_v-\twid Y^{R}_v)/\sigma^2(\twid Y^{R}_v)-\sum_{\substack{\ol{R}\supset {R}\\ |\ol{R}|=|R|+1}} \lambda^{R,\ol{R}}_v + \sum_{\substack{\ul R\subset R,\ul R\supset S\\ |\ul R|=|R|-1}} \lambda^{\ul R,R}_{v[\ul R]}=0.
    \end{equation}
    
    \noindent Case 3: In the case of $\mscr A$ and $v\in V^{\mscr A}$, the derivative with respect to $\hat Y^{\mscr A}_v$ gives:
    \begin{equation}\label{eq:3}
        2(\hat Y^{\mscr A}_v - \twid Y^{\mscr A}_v)/\sigma^2(\twid Y^{\mscr A}) 
        +\sum_{\substack{|\ol S|=|\mscr A|-1\\ \ol S\supset S}} \lambda^{\ol S,\mscr A}_{v[\ol S]} = 0.
    \end{equation}
    Together, all of the instances of \eqref{eq:1}, \eqref{eq:2}, and \eqref{eq:3}, along with the self-consistency constraints describe the solution space. In what follows, we combine these equations to show that the solution space can be decomposed into a down pass to obtain $\hat Y^S$ and a projection to compute the other entries. 

    Let $v\in V^S$ be held fixed. Choose $R\supset S$ such that $|R|=|S|+1$. Summing up \eqref{eq:2} over all $w\in V^R$ which satisfy $w[S]=v$, and dividing through by $|V^R|/|V^S|$ (which is the number of equations being added), gives 
    \[\frac{|V^S|}{|V^R|} 2(\hat Y^S_v - \twid Y^R_{v(R)}) \sigma^2(\twid Y^R_v) -  \sum_{\substack{\ol R\supset R\\ |\ol R|=|R|+1}}\frac{|V^S|}{|V^R|}\sum_{\substack{w\in V^R\\ w[S]=v}} \lambda^{\ol R,R}_w + \lambda^{S,R}_v=0.\]
    Using the fact that $\sigma^2(\twid Y^R_{v(R)}) = (|V^R|/|V^S|)\sigma^2(\twid Y^R_v)$ and introducing the notation $\overline \lambda^{\ol R,R}_v = \frac{|V^S|}{|V^R|}\sum_{\substack{w\in V^R\\ w[S]=v}} \lambda^{\ol R,R}_w $, we simplify this as:
    \begin{equation}
        \label{eq:4}
        2(\hat Y^S_v - \twid Y^R_{v(R)})/\sigma^2(\twid Y^R_{v(R)}) - \sum_{\substack{\ol R\supset R\\ |\ol R|=|R|+1}} \overline \lambda^{\ol R,R}_{v} + \lambda^{S,R}_v=0.
    \end{equation}
    Adding \eqref{eq:4} to \eqref{eq:1} for all $R\supset S$ such that $|R|=|S|+1$, we see that the negative $\lambda$'s in \eqref{eq:1} cancel with the positive $\lambda$'s in \eqref{eq:4}. This gives:
    \begin{equation}
        \label{eq:1.1}
        2\left(\hat Y^S_v\sum_{\substack{R\supset S\\ |R|=|S|+1}} \frac{1}{\sigma^2(\twid Y^R_{v(R)})} - \sum_{\substack{R\supset S\\ |R|=|S|+1}} \frac{\twid Y^R_{v(R)}}{\sigma^2(\twid Y^R_{v(R)})}\right) - \sum_{\substack{\ol R\supset R\supset S\\ |\ol R|=|R|+1=|S|+2}} \overline \lambda^{R,\ol R}_{v} + \sum_{T_i} \lambda^{T_i,S}_{v[T_i]} = 0.
    \end{equation}

    Similarly, we next choose $R\supset S$ such that $|R|=|S|+2$, sum up \eqref{eq:2} over all $w\in V^R$ such that $w[S]=v$, and divide by $|V^R|/|V^S|$:
    \begin{equation}\label{eq:4.1}
        2(\hat Y^S_v - \twid Y^R_{v(R)})/\sigma^2(\twid Y^R_{v(R)}) - \sum_{\substack{\ol R\supset R\\ |\ol R|=|R|+1}} \overline \lambda^{R,\ol R}_{v} + \sum_{\substack{\ul R\subset R, R'\supset S\\ |\ul R|=|R|-1}} \overline \lambda^{\ul R,R}_v=0,
    \end{equation}
    where the last term is obtained by simplifying the following:
    \begin{align}
        \sum_{\substack{\ul R\subset R,\ul R\supset S\\ |\ul R|=|R|-1}}\frac{|V^S|}{|V^R|} \sum_{\substack{w\in V^R\\ w[S]=v}} \lambda^{\ul R,R}_{w[\ul R]}
    &=\sum_{\substack{\ul R\subset R,\ul R\supset S\\ |\ul R|=|R|-1}} \frac{|V^S|}{|V^{\ul R}|} \sum_{\substack{v'\in V^{\ul R}\\ v'[S]=v}} \frac{|V^{\ul R}|}{|V^R|} \sum_{\substack{w\in V^R\\ w[\ul R]=v'}} \lambda^{\ul R,R}_{v'}\label{eq:simplify}\\
    &=\sum_{\substack{\ul R\subset R,\ul R \supset S\\ |\ul R|=|R|-1}} \frac{|V^S|}{|V^{R}|} \sum_{\substack{v'\in V^{\ul R}\\ v'[S]=v}} \lambda^{\ul R,R}_{v'}\notag\\
    &=\sum_{\substack{\ul R\subset R,\ul R\supset S\\ |\ul R|=|R|-1}}\overline \lambda^{\ul R,R}_{v}.\notag
    \end{align}
    Adding \eqref{eq:4.1} to \eqref{eq:1.1} for all $R\supset S$ such that $|R|=|S|+2$, we cancel the negative terms of \eqref{eq:1.1} again, obtaining,
    \begin{equation}
        2\left(\hat Y^S_v \sum_{\substack{R\supset S\\ |R|\leq |S|+2}} \frac{1}{\sigma^2(\twid Y^R_{v(R)})} - \sum_{\substack{R\supset S\\ |R|\leq |S|+2}} \frac{\twid Y^R_{v(R)}}{\sigma^2(\twid Y^R_{v(R)})} \right) 
        -\sum_{\substack{\ol R\supset R\supset S\\|\ol R|=|R|+1=|S|+3}}\overline \lambda^{R,\ol R}_v 
        +\sum_{T_i} \lambda^{T_i,S}_{v[T_i]} = 0.
    \end{equation}
    Continuing this process, until we have addressed all $R\supset S$, $R\neq \mscr A$, we obtain:
    \begin{equation}\label{eq:1.2}
         2\left(\hat Y^S_v \sum_{\substack{R\supset S\\ R\neq \mscr A}} \frac{1}{\sigma^2(\twid Y^R_{v(R)})} - \sum_{\substack{R\supset S\\ R\neq \mscr A}} \frac{\twid Y^R_{v(R)}}{\sigma^2(\twid Y^R_{v(R)})} \right) 
        -\sum_{\substack{R\supset  S\\|R| = |\mscr A|-1}}\overline \lambda^{R,\mscr A}_v 
        +\sum_{T_i} \lambda^{T_i,S}_{v[T_i]} = 0.
    \end{equation}
    Finally, we sum up \eqref{eq:3} over all $w\in V^{\mscr A}$ such that $w[S]=v$ and divide by $|V^{\mscr A}|/|V^S|$:
    \begin{equation}\label{eq:5}
        2(\hat Y^S_v - \twid Y^{\mscr A}_{v(\mscr A)})/\sigma^2(\twid Y^{\mscr A}_{v(\mscr A)}) + \sum_{\substack{ R\supset S\\|R|=|\mscr A|-1}} \overline \lambda^{R,\mscr A}_v=0,
    \end{equation}
    where the last term is obtained by simplifying an expression similar to \eqref{eq:simplify}. 

    Adding \eqref{eq:5} to \eqref{eq:1.2}, we have
    \begin{equation}\label{eq:almost}
        2\left(\hat Y^S_v \sum_{R\supset S} \frac{1}{\sigma^2(\twid Y^R_{v(R)})} - \sum_{R\supset S} \frac{\twid Y^R_{v(R)}}{\sigma^2(\twid Y^R_{v(R)})}\right) + \sum_{T_i} \lambda^{T_i,S}_{v[T_i]}=0.
    \end{equation}
    Using the $\acute Y$ notation, we can write this as 
    \begin{equation}
        \label{eq:1.3}
        2(\hat Y^S_v - \acute Y^S_v)/\sigma^2(\acute Y^S_v)+ \sum_{T_i} \lambda^{T_i,S}_{v[T_i]}=0.
    \end{equation}
    Since all of the steps we took to get \eqref{eq:1.3} are invertible, we see that the original solution space is equivalent to the one implied by \eqref{eq:1.3}, \eqref{eq:2}, \eqref{eq:3}, and self-consistency. We see that \eqref{eq:1.3} over all $v\in V^S$ along with the constraint $\hat Y^S_{v(s)} = \hat Y^{T_i}_{v(s)}$ for all $T_i$ and $v\in V^{T_i}$ are equivalent to the following optimization problem:
    \begin{align*}
        &\text{minimize } \sum_{v\in V^S} (\hat Y^S_v - \acute Y^S_v)^2/\sigma^2(\acute Y^S_v)\\
        &\text{subject to } \hat Y^S_{v(S)} = \hat Y^{T_i}_{v},\quad \text{for all $T_i$ and $v\in V^{T_i}$},
    \end{align*}
    which is an equivalent formulation of the down pass solution $\hat Y^S$. Furthermore holding $\hat Y^S$ fixed, \eqref{eq:2}, \eqref{eq:3} and the constraints $\hat Y^{R'}_{v(R')} = \hat Y^R_{v}$ for all $R'\supset R\supset S$ such that $|R'|=|R|+1$ and all $v\in V^R$, are the Lagrange equations for the problem:
    \begin{align*}
        &\text{minimize} \sum_{R\supsetneq S} \sum_{v\in V^R} (\hat Y^R_v - \twid Y^R_v)^2/\sigma^2(\twid Y^R)\\
        &\text{subject to } \hat Y^{\ol R}_{v(\ol R)} = \hat Y^R_v,\quad \text{for all $\ol R\supset R\supset S$ such that $|\ol R|=|R|+1$ and all $v\in V^R$,}\\
        &\phantom{\text{subject to } \hat Y^{\ol R}_{v(\ol R)} = \hat Y^R_v,}\quad\text{where $\hat Y^S$ is held fixed},
    \end{align*}
    which we know has a solution, achieved by a projection. Thus, we have found a solution to the Lagrange equations, which must be equivalent to the corresponding projection. We conclude that under assumptions (A1)-(A3), the SEA BLUE procedure for $\hat Y^S$ is the BLUE for $Y^S$.     
\end{proof}

\lemZ*
\begin{proof}
We only prove the results for $\hat Z^S$ as the argument for $\acute Z^S$ is the same up to a change of notation.  For the reader's convenience, we first recall the definition of $\hat Z_v^S$:

\begin{equation*}
     \hat Z^S_v =\sum_{k=1}^{|S|}\sum_{\substack{\{A_1,\ldots, A_k\}\subset S\\ R = S\setminus \{A_1,\ldots, A_k\}}} (-1)^{k+1} \frac{\hat Y^R_{v[R]}}{|A_1|\cdots |A_k|},
\end{equation*}

    1) Choose $A^*\in S$, set $R=S\setminus\{A^*\}$, and let $v\in V^{R^*}$. Then, using the self-consistency of the prior $\hat Y^R$'s, we can write, 
    \begin{align*}
        \hat Z^S_{v(S)}&= \left(\hat Y^{R}_v+\sum_{\substack{A\in R\\ \ul R=S\setminus \{A,A^*\}}} \frac{\hat Y^{\ul R}_{v[\ul R]}}{|A|}\right)\\
    &-\left(\sum_{\substack{A\in R\\ \ul R = S\setminus \{A,A^*\}}} \frac{\hat Y^{\ul R}_{v[\ul R]}}{|A|} + \sum_{\substack{A_1,A_2\in R\\ \ul R = S\setminus\{A_1,A_2,A^*\}}} \frac{\hat Y^{\ul R}_{v[\ul R]}}{|A_1||A_2|}\right)\\
        &+\left( \sum_{\substack{A_1,A_2\in R\\ \ul R=S\setminus\{A_1,A_2,A^*\}}} \frac{\hat Y^{\ul R}_{v[\ul R]}}{|A_1||A_2|}
        +\sum_{\substack{A_1,A_2,A_3\in R\\
        R = S\setminus \{A_1,A_2,A_3,A^*\}}}\frac{ \hat Y^{\ul R}_{v[\ul R]}}{|A_1||A_2||A_3|}\right)\\
        &+\cdots +
        (-1)^{|S|}\left(\sum_{\substack{A\in R\\ \ul R=\{A\}}} \frac{\hat Y^{\ul R}_{v[\ul R]} }{\prod_{A_i\not\in \{A,A^*\}} |A_i|}
        +\frac{\hat Y^\emptyset }{\prod_{A\neq A^*} |A|}\right)\\
        &+ (-1)^{|S|+1} \frac{\hat Y^\emptyset }{\prod_{A\neq A^*} |A|},\\
    \end{align*}
    which we see is a telescoping sum, canceling all of the terms except the first one, giving the desired result: $Z^*_{v(S)}= \hat Y^{R}_v$. In the above expression, each of the quantities in parentheses corresponds to one of the terms in the outer sum over $k$ of \eqref{eq:Zhat} with the first term consisting of the margins for which $A^*\notin \ul R$ and the second those with $A^*\in \ul R$. Note that if for any particular term, $A^*\in \ul R$, then we marginalize out the $A^*$ variable with no change to the coefficients. On the other hand, if $A^*\not\in \ul R$, then we simply get $|A^*|$ copies of the original term, canceling the factor of $|A^*|$ in the denominator. See Example \ref{ex:Z} for an illustrative example.

    We have established that $\hat Z^S$ is self-consistent with the $\hat Y^R$ for $R\subset S$ which satisfy $|R|=|S|-1$. It follows that $\hat Z^S$ is self-consistent with the higher margins, as they are already self-consistent with each other. 

    2) It suffices to show that  for all $Z^0\in \mscr S_S$, $\sum_{v\in V^S} Z^0_v \hat Z^S_v=0$:
    \begin{align*}
       \sum_{v\in V^S} Z_v^0\hat Z^S_v
       &=\sum_{v\in V^S}Z_v^0\left[ \sum_{k=1}^{|S|}\sum_{\substack{\{A_1,\ldots, A_k\}\subset S\\ \ul S=S\setminus \{A_1,\ldots, A_k\}}} \frac{(-1)^{k+1} \hat Y^{\ul S}_{v[\ul S]}}{\prod_{i=1}^k |A_i|}\right]\\
       &=\sum_{k=1}^{|S|}\sum_{\substack{\{A_1,\ldots, A_k\}\subset S\\ \ul S=S\setminus \{A_1,\ldots, A_k\}}}\sum_{v\in V^S} \frac{(-1)^{k+1}Z_v^0\hat Y^{\ul S}_{v[\ul S]}}{\prod_{i=1}^k |A_i|}\\ 
       &=\sum_{k=1}^{|S|}\sum_{\substack{\{A_1,\ldots, A_k\}\subset S\\ \ul S=S\setminus \{A_1,\ldots, A_k\}}}\sum_{w\in V^{\ul S}}\sum_{\substack{v\in V^S\\w=v[\ul S]}} \frac{(-1)^{k+1}Z_v^0\hat Y^{\ul S}_{w}}{\prod_{i=1}^k |A_i|}\\ 
       &=\sum_{k=1}^{|S|}\sum_{\substack{\{A_1,\ldots, A_k\}\subset S\\ \ul S=S\setminus \{A_1,\ldots, A_k\}}}\sum_{w\in V^{\ul S}} \frac{(-1)^{k+1}\hat Y^{\ul S}_{w}}{\prod_{i=1}^k |A_i|}\sum_{\substack{v\in V^S\\w=v[\ul S]}}Z_v^0.\\ 
    \end{align*}
   Since $Z^0\in \mscr S_S$, we have that $\sum_{\substack{v\in V^S\\w=v[\ul S]}}Z_v^0=Z^0_{w(S)}=0$. We conclude that $\sum_{v\in V^S} Z_v^0\hat Z^S_v=0$ as desired.
\end{proof}

\thmFastProj*
\begin{proof}
    Call $P$ the operator that maps $\acute Y^S$ to $\acute Z^S$. To simplify notation, we write $\mscr S$ in place of $\mscr S_S$.
    
    1) We will show that $P = \mathrm{Proj}^I_{\mscr S^\perp}$. It suffices to show that a) the elements of $\mscr S$ are eigenvectors of $P$ with eigenvalue 0, b) the elements of $\mscr S^\perp$ are eigenvectors of $P$ with eigenvalue 1, and c) the operator $P$ is self-adjoint with respect to the standard inner product. This suffices because a) and b) establish that $P$ is idempotent, with kernel $\mscr S$ and range $\mscr S^\perp$, and c) establishes that $P$ is symmetric. Altogether these imply that $P$ is the orthogonal projection onto $\mscr S^\perp$. 

    a) Suppose $\acute Y^S\in \mscr S$, meaning that all margins of $\acute Y^S$ are zero. By the construction of $P$ in \eqref{eq:Zacute}, we have that $\acute Z^S_v=0$ for all $v\in S$. Thus, elements of $\mscr S$ are eigenvectors of $P$ with eigenvalue $0$.

    b) Suppose that $\acute Y^S\in \mscr S^\perp$. We know that $\acute Z^S\in \mscr S^\perp$ by Lemma \ref{lem:Z}. Since $\mscr S^\perp$ is closed under linear combinations, $\acute Z^S-\acute Y^S\in \mscr S^\perp$. However, since $\acute Z^S$ and $\acute Y^S$ have the same margins, by Lemma \ref{lem:Z}, we have that $\acute Z^S-\acute Y^S\in \mscr S$. Thus, $\acute Z^S - \acute Y^S \in \mscr S\cap \mscr S^\perp = \{0\}$, since $\mscr S$ and $\mscr S^\perp$ are orthogonal spaces. Thus, $\acute Z^S=\acute Y^S$. We see that the elements of $\mscr S^\perp$ are eigenvectors of $P$ with eigenvalue $1$.

    c) Let $X$ and $Y$ be two tables of the same dimension. We need to show that $(PX)^\top Y = X^\top (PY)$: 

    \begin{align*}
        (PX)^\top Y &= (PX)^\top (Y - PY + PY)\\
        &=(PX)^\top (Y-PY) + (PX)^\top (PY)\\
        &=(PX)^\top (PY),
    \end{align*}
    where we used the fact that $PX\in \mscr S^\perp$, and $Y-PY\in \mscr S$ which can be seen as follows: let $V_1,\ldots, V_{\dim \mscr S}$ be a basis for $\mscr S$ and let $W_1,\ldots, W_{\dim \mscr S^\perp}$ be a basis for $\mscr S^\perp$, which together form a basis for all tables on the variables $S$. Then there exists real-valued coefficients $c_1,\ldots, c_{\dim \mscr S}$ and $d_1,\ldots, d_{\dim \mscr S^\perp}$ such that 
    \[Y = \sum_{i=1}^{\dim \mscr S} c_i V_i + \sum_{j=1}^{\dim \mscr S^\perp} d_j W_j.\]
    Then,
    \[PY = \sum_{j=1}^{\dim \mscr S^\perp}  d_j W_j,\]
    and we see that $Y-PY = \sum_{i=1}^{\dim \mscr S} c_i V_i \in \mscr S$. 

    By swapping the roles of $X$ and $Y$ above, we can similarly show that $X^\top (PY) = (PX)^\top (PY)$, which implies that $(PX)^\top Y = X^\top (PY)$, establishing that $P$ is self-adjoint in the standard inner product.

    We have established that $P$ is the orthogonal projection with kernel $\mscr S$ and range $\mscr S^\perp$. 

    2) By Theorem \ref{thm:general}, we know that $\hat Y^S = \mathrm{Proj}^{\acute \Sigma^{-1}}_{\mscr S} \acute Y^S + \mathrm{Proj}_{\mscr S^\perp}^{\acute \Sigma} \hat Z^S$ since $\hat Z^S$ is a table with the desired margins, where $\acute \Sigma$ is the covariance of $\acute Y^S$. Under assumptions (A1)-(A3), we have that all elements of $\acute \Sigma$ are independent with equal variance. It follows that the projections are orthogonal projections. So, we can simplify this as,
    \begin{align*}
        \hat Y^S &= \acute Y^S-\acute Z^S+P\hat Z^S\\
       &= \acute Y^S-\acute Z^S+\hat Z^S,
    \end{align*}
    since $\acute Y^S-\acute Z^S = \mathrm{Proj}^I_{\mscr S}$, $P=\mathrm{Proj}^I_{\mscr S^\perp}$, and $\hat Z^S\in \mscr S^\perp$.
\end{proof}


The lemma below quantifies the dimension of $\mscr S_S$, which is defined in Equation \eqref{eq:S}. 
\begin{lem}\label{lem:basis}
    Let $S\in \mscr D$ consist of $k$ variables, each with $n_j$ levels. Let $\mscr S_S$ be defined as in \eqref{eq:S}.
    \begin{enumerate}
        \item The dimension of $\mscr S_S$ is
    \[\sum_{j=0}^{k}(-1)^{k-j}\sum_{\{m_1,\ldots, m_{j}\}\subset \{n_1,\ldots, n_k\}}m_1\cdots m_j=\prod_{i=1}^k (n_i-1).\] 
        \item A basis for $\mscr S_S$ is the set of tables $Z^{(i_j)_{j=1}^k}$ for $i_j\in \{1,\ldots, n_j-1\}$, where for $v\in V^S$, 
    \[Z_v^{(i_j)_{j=1}^k} = \left[ \prod_{j=1}^k (-1)^{I(v_j=i_j+1)} I(v_j\in \{i_j,i_j+1\})\right].\]
    \end{enumerate}
    
\end{lem}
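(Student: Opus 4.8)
The plan is to identify the vector space of all tables on $S$ with the tensor product $\bigotimes_{j=1}^k \RR^{n_j}$, so that marginalizing out the $j$th variable becomes the application of the all-ones covector $\ONE_{n_j}^\top$ in the $j$th slot. First I would observe that the family of constraints defining $\mscr S_S$ in \eqref{eq:S} collapses to the single-variable margins: if summing a table over any one coordinate (with the others held fixed) yields zero, then summing over any larger index set $S\setminus R$ also yields zero, since one may sum out a single coordinate first and then sum the resulting zero table. Hence $\mscr S_S = \bigcap_{j=1}^k \ker \Phi_j$, where $\Phi_j$ denotes the operator that sums a table over its $j$th coordinate.

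For the dimension in part 1, I would use the decomposition $\RR^{n_j} = \mathrm{span}(\ONE_{n_j}) \oplus W_j$ with $W_j = \{x\in\RR^{n_j}\mid \ONE_{n_j}^\top x = 0\}$, which has dimension $n_j-1$. Writing the whole space as a direct sum over the choice of factor $\mathrm{span}(\ONE_l)$ or $W_l$ in each slot, one checks that $\ker \Phi_j$ is exactly the sum of all tensor components carrying the factor $W_j$ in slot $j$; intersecting over all $j$ therefore leaves only the single component $W_1\otimes\cdots\otimes W_k$. Thus $\mscr S_S = \bigotimes_{j=1}^k W_j$ and $\dim \mscr S_S = \prod_{j=1}^k (n_j-1)$. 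The displayed alternating sum then requires no separate argument: it is exactly the expansion of $\prod_{i=1}^k (n_i-1)=\prod_{i=1}^k\bigl(n_i+(-1)\bigr)$, grouped by the size $j$ of the subset on which the factor $n_i$ (rather than $-1$) is selected, which accounts for the sign $(-1)^{k-j}$.

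For the basis claim in part 2, I would note that each proposed table is a rank-one tensor: from the formula one sees that $Z^{(i_j)_{j=1}^k}=\bigotimes_{j=1}^k (e_{i_j}-e_{i_j+1})$, since the $j$th factor equals $+1$ at level $i_j$, $-1$ at level $i_j+1$, and $0$ elsewhere. Each difference vector $e_{i_j}-e_{i_j+1}$ lies in $W_j$, so $Z^{(i_j)_{j=1}^k}\in\bigotimes_j W_j=\mscr S_S$, which verifies membership. The vectors $\{e_i-e_{i+1}\mid i=1,\ldots,n_j-1\}$ are linearly independent and number $n_j-1=\dim W_j$, hence form a basis of $W_j$; taking tensor products of these bases across the slots yields a basis of $\bigotimes_j W_j$. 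Since this collection is precisely $\{Z^{(i_j)_{j=1}^k}\}$ and has cardinality $\prod_j(n_j-1)=\dim\mscr S_S$, it is a basis of $\mscr S_S$.

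The two supporting facts I would state cleanly are standard linear algebra: that the kernel of $\ONE^\top$ applied in one slot of a tensor product is obtained by replacing that slot's factor with $W_j$, and that tensor products of bases of subspaces form a basis of their tensor product. The only place demanding genuine care is the reduction from the full margin family in \eqref{eq:S} to the single-coordinate constraints, together with the bookkeeping that identifies $\bigcap_j \ker\Phi_j$ with the lone summand $W_1\otimes\cdots\otimes W_k$; once the tensor-product picture is in place, both the dimension count and the basis verification are immediate.
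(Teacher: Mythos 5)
Your proof is correct, but it takes a genuinely different route from the paper's. The paper works with the orthogonal complement: it makes the same observation you do that only the $(|S|-1)$-way margin constraints matter, then counts those constraints and computes their rank by inclusion--exclusion, so the alternating sum in part 1 appears there as the rank of the constraint set subtracted from $n_1\cdots n_k$. For part 2 the paper verifies membership directly (each proposed table has a single $+1$ and $-1$ along every coordinate line, so all margins vanish), proves linear independence by noting that in lexicographic order the tables form a lower-triangular matrix, and finishes by matching cardinality against the dimension from part 1. Your argument instead makes the single structural identification $\mscr S_S = W_1\otimes\cdots\otimes W_k$, after which the dimension, the membership of the rank-one tensors $\bigotimes_{j}(e_{i_j}-e_{i_j+1})$, and their linear independence all follow from standard facts about tensor products of subspaces and of bases, while the alternating sum is demoted to a purely algebraic expansion of $\prod_{i}(n_i-1)$. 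What your route buys: it replaces the paper's tersest step, the inclusion--exclusion rank claim (asserted with little justification), with a transparent kernel computation, and it makes independence automatic instead of requiring the triangularity observation; it would also generalize readily to other slot-wise linear constraints. What the paper's route buys: it stays in elementary coordinate language with no tensor formalism, and the inclusion--exclusion derivation explains why the alternating-sum expression appears in the statement at all (it is a count of independent constraints), rather than treating it as an incidental identity.
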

\begin{proof}
 Recall that $\mscr S_S=\{Z^S|\text{for all } \ul S\subsetneq S, \text{ and all }v\in V^{\ul S}, \text{ we have }Z^S_{v(S)}=0\}$. To establish the dimension of $\mscr S_S$, we will instead calculate the dimension of its orthogonal complement $\mscr S_S^\perp$. We know that $\mscr S_S^\perp$ is spanned by the set of self-consistency constraints, which enforce  $Z_{v(S)}=0$ for $\ul S\subset S$ such that $|\ul S|=|S|-1$ and all $v\in V^{\ul S}$ (note that requiring every margin to be equal to 0 is equivalent to only requiring that the $|S|-1$ margins are equal to 0). We observe that there are $\sum_{\{m_1,\ldots, m_{k-1}\}\subset \{n_1,\ldots, n_k\}} m_1m_2\cdots m_{k-1}$ such constraints, but they are not linearly independent. Indeed, for any such $\ul S$, the set of constraints $\{Z_{v(S)}=0$ for $v\in V^{\ul S}\}$ also implies that higher margins are equal to zero, giving redundant information for different $\ul S$'s. Using inclusion-exclusion, we can count the rank of these constraints as 
    \[\sum_{j=0}^{k-1}(-1)^{k-j-1}\sum_{\{m_1,\ldots, m_{j}\}\subset \{n_1,\ldots, n_k\}}m_1\cdots m_j.\] 
    Then, the dimension of $\mscr S_S$ is 
    \begin{align*}
        &(n_1\cdots n_k)-\sum_{j=0}^{k-1}(-1)^{k-j-1}\sum_{\{m_1,\ldots, m_{j}\}\subset \{n_1,\ldots, n_k\}}m_1\cdots m_j\\
        &=\sum_{j=0}^{k}(-1)^{k-j}\sum_{\{m_1,\ldots, m_{j}\}\subset \{n_1,\ldots, n_k\}}m_1\cdots m_j\\
        &=\prod_{i=1}^k (n_i-1).
    \end{align*}

    For part 2, note that each $Z_v^{(i_j)_{j=1}^k}$ is a member of $\mscr S_S$ since each table has a single ``$+1$'' and ``$-1$'' in each coordinate. Furthermore, the $Z_v^{(i_j)_{j=1}^k}$ can be seen to be linearly independent since in lexicographic order, they form a lower triangular matrix. Finally, we count that there are $\prod_{i=1}^k(n_i-1)$ such tables, which is equal to the rank of $\mscr S_S$, shown in part 1. We conclude that the $Z_v^{(i_j)_{j=1}^k}$ indeed form a basis for $\mscr S_S$.  
\end{proof}

\lemP*
\begin{proof}
    In Lemma \ref{lem:basis}, we established that the dimension of the null space when making a table with $i$ variables self-consistent with the margins one level above is $(I-1)^i$. These constraints can be seen to be linearly independent, resulting in 
    \[\sum_{i=0}^k \binom ki (I-1)^i = I^k,\]
    as the dimension of the complete null space. Thus, 
    $p = \min\{I^k,(I+1)^k-I^k\}.$ 
    The runtime and memory results follow by using the formulas $O(np+p^{2.373})$ for runtime and $O(np)$ for memory, along with the simplification $p = O(I^{k-1})$ as $I\rightarrow \infty$ and $p=O(I^k)$ as $k\rightarrow \infty$. 
\end{proof}

\begin{restatable}{lem}{lem:Ztime}\label{lem:Ztime}
    Let $S$ be a set of variables, with size $|S|=k$. Given self-consistent margins $\hat Y^R$ for $R\subsetneq S$, the time to produce $\hat Z^S$ is $O(k2^{k-1})$.
\end{restatable}
\begin{proof}
The construction in \eqref{eq:Zhat} requires summing over all proper subsets of $S$, and calculating $\prod_{i=1}^j |A_i|$ for a subset $\{A_1,\ldots, A_j\}\subset S$. The product takes $O(j)$ time, so we have a runtime on the order of 
\begin{align*}
    \sum_{j=1}^{k} \binom{k}{j} k = k2^{k-1},
\end{align*}
since there are $\binom{k}j$ subsets of $S$ of size $j$. 
\end{proof}

 \lemCollectionCost*
\begin{proof}
1) We will start by marginalizing out one variable at a time and using previously computed margins as the starting point for subsequent margins to avoid marginalizing the same variable more than once. Suppose we consider all tables which have marginalized over $i$ variables. There are $\binom ki$ such tables, which consist of $I^{k-i}$ counts each; for each count in a margin, it takes $I$ time to compute it from a previous margin. Thus, the runtime is 
\begin{align*}
    \sum_{i=1}^k \binom ki I^{k-i+1} &=I(I+1)^k-I^{k+1},
\end{align*}
which is $O(I^k)$ as $I\rightarrow \infty$ and $O((I+1)^k)$ as $k\rightarrow \infty$. The memory required (including the original counts in $S$) is 
\begin{align*}
    \sum_{i=0}^k \binom ki I^{k-i}=(I+1)^k,
\end{align*}
which is $O(I^k)$ as $I\rightarrow \infty$ and $O((I+1)^k)$ as $k\rightarrow \infty$. 

2) Using the above result for a single table and summing over all tables, we have a total runtime of 
\begin{align*}
    \sum_{j=0}^k \binom kj \sum_{i=1}^j \binom ji I^{k-i+1}
    &=I(2I+1)^k-2^kI^{k+1},\\
\end{align*}
which is $O((I+1)^k)=O(n)$ as $I\rightarrow \infty$ , and is $O((2I+1)^k)=O(n^{\log(2I+1)/\log(I+1)})$ as $k\rightarrow \infty$. 

In the collection step, every table produces at most $O(n)$ estimates, which are aggregated into the running totals $C_v^S$ and $D_v^S$, which are of order $O(n)$. Thus, at no point does the memory requirement exceeed $O(n)$. For the down pass, when (A1)-(A3) hold, we can use the efficient formula for $\hat Y^S$ in Theorem \ref{thm:FastProj} which, as Remark \ref{rem:downpass} points out, has the same computational and memory requirement as the collection step. Note that Lemma \ref{lem:Ztime} establishes that the runtime to assemble $\acute Z^S$ and $\hat Z^S$, given the margins, is of a lower order than the time to compute the margins in the first place. 
\end{proof}

\propcovarianceExact*
\begin{proof}
    For simplicity, we write $P=\mathrm{Proj}_{S_0}^{\Sigma^{-1}}$, where $P$ is the projection matrix defined in Theorem \ref{thm:general} which is implemented by SEA BLUE. We can write 
    \begin{align*}
        P\Sigma P^\top &= P\Sigma^{1/2} \Sigma^{1/2} P^\top\\
        &=P \Sigma^{1/2}(P\Sigma^{1/2})^\top,
    \end{align*}
    where we use the fact that $\Sigma^{1/2}$ is symmetric. In the case that $\Sigma=\sigma^2 I$, we have that $P$ is an orthogonal projection, so $P^\top=P$ and by the idempotency of projection matrices we have 
    \[P\sigma^2 IP^\top=\sigma^2 PP=\sigma^2 P.\]
    The correctness of the confidence interval follows from Theorem \ref{thm:general}.
\end{proof}

\propMonteCarloT*
\begin{proof}
    Even when (A2) and (A3) may not hold, SEA BLUE is still a linear procedure. It is well known that a linear transformation of a multivariate normal distribution is another multivariate normal. Since the mean of $\hat Y^{(j)}$ is known to be zero, $(\sigma^2)^m_i$ is an unbiased estimator for the variance of $\vect(\hat Y)_i$ and we have that $\frac{R (\sigma^2)^R_i}{\var(\vect(\hat Y)_i)}\sim \chi^2(R)$. Then,
    \[\frac{\vect(\hat Y)_i-\vect(Y)_i}{\sqrt{(\sigma^2)^R_i}}\sim t(R),\]
    and the validity of the confidence interval follows.
\end{proof}

 \propMonteCarloDF*
 \begin{proof}
 Note that $|\vect(\hat Y)_i - \vect(Y)_i|$, $|\vect(\hat Y^{(1)}_i)|,\ldots, |\vect(\hat Y^{(m)}_i)|$ are all i.i.d., implying that any ordering is equally likely. So, 
     \begin{align*}
         P\left(|\vect(\hat Y)_i - \vect(Y)_i|\leq  X^i_{(\lceil (1-\alpha)(R+1)\rceil)}\right)
         &\geq \frac{\lceil (1-\alpha)(R+1)\rceil}{R+1}
         \geq \frac{(1-\alpha)(R+1)}{R+1}
         =1-\alpha,
     \end{align*}
     where we used the fact that SEA BLUE is a linear and unbiased procedure to establish that $|\vect(\hat Y_i)-\vect(Y_i)|$ is equal in distribution to the $|\vect(\hat Y_i^{(j)})|$'s, which is used for the first inequality.  Note that the first inequality is actually an equality if the variables are continuous, and the inequality may be strict when ties are possible. 
     Thus, the interval $\vect(\hat Y)_i \pm X^i_{(\lceil (1-\alpha) (R+1)\rceil)}$ 
     is a $(1-\alpha)$-confidence interval for $\vect(Y)_i$.
 \end{proof}

\bibliography{references}

\end{document}